\newcommand{\hermconj}{^{\mathsf{H}}}
\newcommand{\trans}{^{\mathsf{T}}}
\newcommand{\re}{\operatorname{Re}}
\newcommand{\st}{\operatorname{s.t.}}
\newcommand{\diag}{\operatorname{diag}}
\DeclareMathOperator{\Tr}{Tr}
\DeclareMathOperator{\find}{find}
\DeclareMathOperator{\argmin}{argmin}
\DeclarePairedDelimiter\floor{\lfloor}{\rfloor}
\newtheorem{Lemma}{Lemma}
\newtheorem{Proposition}{Proposition}
\newtheorem{Remark}{Remark}
\begin{document}

\title{Reconfigurable Intelligent Surface Assisted Mobile Edge Computing with Heterogeneous Learning Tasks}

\author{
    Shanfeng Huang, Shuai Wang, Rui Wang, Miaowen Wen, and Kaibin Huang

    \thanks{	
    S. Huang is with the Department of Electrical and Electronic Engineering, Southern University of Science and Technology, Shenzhen 518055, China, and also with the Department of Electrical and Electronic Engineering, The University of Hong Kong, Hong Kong (e-mail: sfhuang@eee.hku.hk).

    S. Wang is with the Department of Electrical and Electronic Engineering, and also with the Department of Computer Science and Engineering, and also with the Sifakis Research Institute for Trustworthy Autonomous Systems, Southern University of Science and Technology, Shenzhen 518055, China (e-mail: wangs3@sustech.edu.cn).
	
    R. Wang is with the Department of Electrical and Electronic Engineering, Southern University of Science and Technology, Shenzhen, China, and also with the Research Center of Networks and Communications, Peng Cheng Laboratory, Shenzhen, China (e-mail: wang.r@sustech.edu.cn).

    M. Wen is with the School of Electronic and Information Engineering, South China University of Technology, Guangzhou 510641, China (e-mail: eemwwen@scut.edu.cn).

	K. Huang is with the Department of Electrical and Electronic Engineering, The University of Hong Kong, Hong Kong (e-mail: huangkb@eee.hku.hk).}
}

\maketitle

\begin{abstract}
The ever-growing popularity and rapid improving of artificial intelligence (AI) have raised rethinking on the evolution of wireless networks.  Mobile edge computing (MEC) provides a natural platform for AI applications since it is with rich computation resources to train machine learning (ML) models, as well as low-latency access to the data generated by mobile and internet of things (IoT) devices. In this paper, we present an infrastructure to perform ML tasks at an MEC server with the assistance of a reconfigurable intelligent surface (RIS). In contrast to conventional communication systems where the principal criterions are to maximize the throughput, we aim at maximizing the learning performance. Specifically, we minimize the maximum learning error of all participating users by jointly optimizing transmit power of mobile users, beamforming vectors of the base station (BS), and the phase-shift matrix of the RIS. An alternating optimization (AO)-based framework is proposed to optimize the three terms iteratively, where a successive convex approximation (SCA)-based algorithm is developed to solve the power allocation problem, closed-form expressions of the beamforming vectors are derived, and an alternating direction method of multipliers (ADMM)-based algorithm is designed together with an error level searching (ELS) framework to effectively solve the challenging nonconvex optimization problem of the phase-shift matrix. Simulation results demonstrate significant gains of deploying an RIS and validate the advantages of our proposed algorithms over various benchmarks. Lastly, a unified communication-training-inference platform is developed based on the CARLA platform and the SECOND network, and a use case (3D object detection in autonomous driving) for the proposed scheme is demonstrated on the developed platform.
\end{abstract}

\begin{IEEEkeywords}
	Reconfigurable intelligent surface, mobile edge computing, edge learning, ADMM
\end{IEEEkeywords}

\IEEEpeerreviewmaketitle

\section{Introduction}
The prevalence of mobile terminals and rapid growth of Internet of Things (IoT) technology have boosted a wide spectrum of new applications, many of which are computation-intensive and latency-critical, such as image recognition, mobile augmented reality, and edge machine intelligence. Mobile edge computing (MEC) is envisioned as a promising paradigm to ease the conflict between resource-hungry applications and resource-limited mobile devices, by providing cloud-computing capabilities within the radio access network in close proximity to mobile subscribers \cite{Mao2017MECsurvey}.

MEC is naturally well-suited for the AI-oriented networks, and  the marriage of MEC and AI has given rise to a new research area, called  ``edge intelligence (EI)'' or ``edge AI'' \cite{Zhou2019EI,Li2019EdgeAI,Zhu2020ELsurvey,Yu2020IE}. In general, there are two ways to realize the vision of edge AI, i.e., model sharing and data sharing \cite{Zhou2019EI,Chen2019DLwithEdgeComp,Wang2020EdgeLearningTWC}. Model sharing is typically achieved by federated learning which jointly exploits  on-device training and federated aggregation, and a series of outstanding works focus on this  type of edge learning \cite{Wang2019FL,Gunduz2019MLintheair,Zhu2019BroadbandAggr,Sun2020FLIoT,Tran2019FL,Du2020SGquant}. However, running computation-intensive
algorithms such as deep neural network models locally is very resource-demanding and requires high-end processors to be
armed in the devices \cite{Zhou2019EI}.  Moreover, training neural network models requires the training data to have both input data and labels. In practice, however, the raw data collected by IoT devices, are generally unlabeled data and cannot be directly used for training.
Therefore, we focus on data sharing where the data collected from the mobile devices (MDs) are offloaded to the MEC server for model training.

While MEC brings many benefits, the last-mile communications from mobile terminals to MEC servers are typically via wireless channels which are prone to channel hostilities. To this end, an emerging paradigm called reconfigurable intelligent surface (RIS) was proposed, aiming at creating a smart radio environment by turning the wireless environment into an optimization variable, which can be controlled and programmed\cite{Renzo2020SREJsac}. Specifically, an RIS is a planar array consisting of a large number of low-cost passive reflecting elements with reconfigurable phase shifts, each of which can be dynamically tuned via a software controller to reflect the incident signals \cite{Basar2019RIS}. Thus, the constructive signals can be combined and the interference can be suppressed by tuning the phase shifts of reflecting elements adaptively \cite{Liaskos2018SCM,Renzo2019SRE,Liu2020RISchannelestimate}. It has been demonstrated that the energy efficiency and throughput can be remarkably enhanced in various wireless communication systems by introducing RISs and jointly optimizing the beamforming vectors and the phase-shift matrices \cite{QWu2019IRSBeamforming,Lin2020risarxiv,Li2020RISUAV,Liu2020RISofdm,Nadeem2020RIS,Han2020IRSpowctrl,Yang2020IRSnoma,Fu2019IRSnoma,Wu2020IRSswip,Guo2020sumrateRIS,Zhang2020LimitedPhase,Di2020HybBF}.

\subsection{Motivations and Related Works}
As elaborated above, MEC provides a perfect platform to house AI applications, and RIS turns the unstable wireless channels into a controllable smart radio environment. In this paper, we investigate the design of an RIS-assisted MEC system with ML tasks. In contrast with conventional communication systems where the  general goals are to maximize the throughput, edge ML systems aim at maximizing the learning performance. As a result, the well-known resource allocation schemes that are optimized for conventional communication systems, such as water-filling scheme \cite{Yu2004waterfilling} and max-min fairness scheme \cite{Li2015mmfairness} may lead to poor learning performance since they do not take into account the learning-specific factors such as model and data complexities. For instance, with the same amount of training data samples, a support vector machine (SVM) and a convolutional neural network (CNN) can achieve different learning accuracies. Moreover, the communication costs for transmitting one data sample for different ML tasks may vary significantly.

Recently, there are some outstanding works that aim to optimize the resource allocation schemes for learning-centric systems. In \cite{Liu2020dataimportance}, the authors proposed a data-importance aware user scheduling scheme for edge ML systems, where data are regarded as having different importance levels based on certain importance measurement and more resources are given to the data with high importance. Nevertheless, the analysis is mainly based on SVM. For more general ML models, the importance of training data is hard to quantify. In \cite{Shi2019RISEL}, the authors investigated an RIS-assisted edge inference system, where the  inference tasks allocation strategy, downlink transmit beamforming and phase shift of the RIS are jointly optimized. In this paper, however, the inference tasks are considered as general edge computing tasks in essence, leading to few insights for real ML tasks.  Besides, the authors in \cite{Yang2020MEC-ML} proposed an MEC-based hierarchical ML tasks distribution framework for industrial IoT, and solved the delay minimization problem considering the ML model complexity and inference error rate, data quality, computing capability at the device and MEC server, and communications bandwidth. More recently, the authors in \cite{Wang2020LearningCentricICC,Wang2020EdgeLearningTWC}  put forth and validated a nonlinear classification error model for ML tasks, based on which a learning-centric power allocation scheme was proposed and shown to outperform conventional water-filling and max-min fairness schemes significantly with respect to learning error. In this paper, we further extend \cite{Wang2020EdgeLearningTWC} to the scenario where an RIS is deployed to provide intelligence to the wireless channels. With the presence of the RIS, the power allocation scheme needs to be redesigned and new challenges in the beamforming vector design and phase shift optimization arise.

\subsection{Our Contributions}
In this paper, we make an attempt on shedding some light on the design of RIS-assisted edge ML with heterogeneous learning tasks. Specifically, we adopt the nonlinear learning error model \cite{Wang2020EdgeLearningTWC,johnson2018accuracypilotdata}, and aim at minimizing the maximum learning error of all the learning tasks by jointly optimizing the transmit power of the mobile devices, the beamforming vectors at the base station (BS) and the phase shift matrix at the RIS. The optimization problem is highly nonconvex and involves too many optimization variables. To address this challenge, we design an alternating optimization (AO)-based framework to decompose the primal problem and each subproblem is efficiently solved either in closed form or with low-complexity algorithms.
Specifically, a successive convex approximation (SCA)-based algorithm is developed to solve the nonconvex power optimization problem. The optimization of beamforming vectors is shown to be equivalent to maximizing the signal-to-interference-plus-noise ratios (SINRs), and closed-form expressions are derived.
To solve the challenging phase- matrix optimization problem, we propose an error level searching (ELS)-based framework to transform the exponential objective into SINR constraints, and exploit alternating direction method of multipliers (ADMM) to decouple the problem to a set of subproblems that can be solved in a distributed manner.
The main contributions of this paper are summarized as follows.
\begin{itemize}
    \item Instead of maximizing the throughput as in conventional communication systems, we formulate the learning error minimization problem, and show that the learning error scales with the number of RIS elements $M$ by a factor of $(\log_2M)^{-d}$, where $d$ represents the difficulty of the learning task.
    \item By deriving the explicit form of ADMM updates in the considered system, it is found that the proposed ADMM-based RIS design is in fact an iterative procedure including individual phase shift update for each user and global aggregation of different users' phase shifts. This allows an efficient and fast implementation in practical RIS edge learning systems.
    \item Simulations on well-known ML models and public datasets verify the nonlinear learning error model, and demonstrate that our proposed scheme can achieve significantly lower learning error than that of various benchmarks.
    \item Lastly, based on the autonomous driving simulator CARLA and the SECOND (Sparse Embedded CONvolutional Detection) neural network, a unified communication-training-inference platform is developed. The new platform provides vivid virtual environments, streaming and dynamic datasets, and supports high-quality 3D visualization and video generation. A use case (3D object detection in autonomous driving) for the proposed scheme is demonstrated based on the developed platform.
\end{itemize}

The remainder of this paper is organized as follows. The system model is introduced in Section II. The problem formulation is illustrated in Section III. An AO-based optimization framework and the corresponding optimization algorithms for power allocation, beamforming vectors and phase-shift matrix are detailed in Section IV. Numerical simulations and experimental results are presented in Section V, and Section VI concludes this paper.

\emph{Notations:} Italic letters, lowercase and uppercase bold letters represent scalars, vectors, and matrices, respectively. The transpose, conjugate, conjugate transpose, matrix inverse, trace operator and diagonal matrix are denoted as $(\cdot)\trans$, $(\cdot)^*$, $(\cdot)\hermconj$, $(\cdot)^{-1}$, $\Tr(\cdot)$ and $\diag(\cdot)$, respectively. $\mathbf I_N$ denotes the $N\times N$ identity matrix and $\|\cdot\|_p$ denotes the $\ell_p$-norm of a vector. $|\cdot|$ and $\re(\cdot)$ respectively denote the modulus and the real part of a complex number, and $\mathcal{CN}(0,1)$ represents the complex Gaussian distribution with zero mean and unit variance. $\floor x$ is the maximum integer less than or equal to $x$.

\section{System Model}
We consider an edge ML system as shown in Fig. \ref{fig:SystemModel}, where an intelligent edge server attached to a BS with $N$ antennas is serving $K$ single-antenna users, each with a ML task.
The communication is assisted by an RIS, consisting of $M$ passive reflecting elements which could rotate the phase of the incident signal waves.
In particular, the edge server is designated to train $K$ classification models by collecting data observed at the $K$ mobile users. The classification models can be CNNs, SVMs, etc.

\begin{figure}[tb]
    \centering
    \includegraphics[width=150mm]{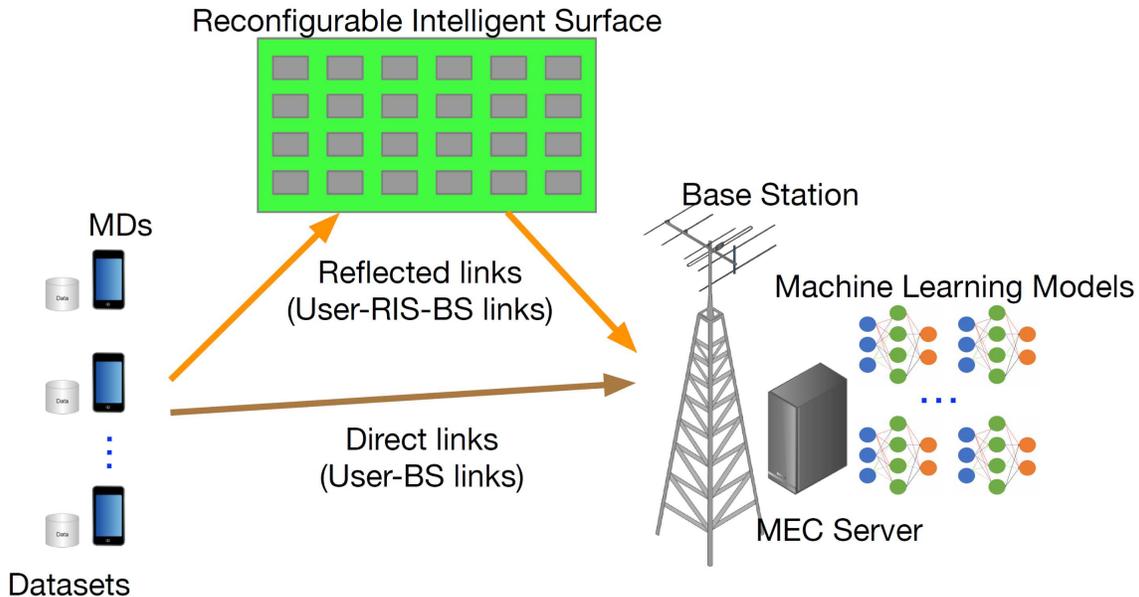}
    \caption{An illustration of the RIS-assisted edge ML system.}
    \label{fig:SystemModel}
\end{figure}

The training data are transmitted from the mobile users to the edge server via wireless channels which have intrinsic random feature due to multi-path effect and  can suffer from high propagation loss \cite{goldsmith_2005}. The wireless channels between the transmitters and receivers may be line-of-sight (LoS) or non-line-of-sight (NLoS). To this end, this paper considers an RIS-assisted scheme that can configure the channel intelligently by tuning the phase shifts of the reflecting elements adaptively. With the presence of the RIS, the channel from user $k$ to the BS includes both the direct link (user-BS link) and the reflected link (user-RIS-BS link), where the direct link includes both LoS and NLoS links, and the reflected link consists of the user-RIS link, the phase shifts at RIS, and the RIS-BS link \cite{QWu2019IRSBeamforming}. Denote the channel vector from $k$-th user to the BS as $\mathbf h_k$. It can be expressed as
\begin{align}
    \mathbf h_k=\underbrace{\mathbf h_{\text{d}, k}}_\text{direct link}+\underbrace{\mathbf G\hermconj \mathbf{\Theta}\hermconj \mathbf h_{\text{r},k}}_\text{reflected link},
\end{align}
where $\mathbf h_{\text{d}, k}\in \mathbb C^{N\times 1}$, $\mathbf h_{\text{r},k}\in \mathbb C^{M\times 1}$, and $\mathbf G\in \mathbb C^{M\times N}$ denote the channel vectors and matrix from user $k$ to the BS, from user $k$ to the RIS, and from the RIS to the BS, respectively. Moreover, $\mathbf{\Theta}=\beta \text{diag}(e^{j\varphi_1},\cdots, e^{j\varphi_M})\in \mathbb C^{M\times M}$ denotes the phase-shift matrix of the RIS, where $\beta\in [0, 1]$ is the amplitude reflection coefficient and $\varphi_{m}\in [0,2\pi)$ is the phase shift of the $i$-th reflecting element. Without loss of generality, $\beta$ is typically set to 1.

Denote the transmitted signal of  user $k \in \{1, 2, \cdots, K\}$ as $x_k$ with power $\mathbb E[|x_k|^2]=p_k$. Accordingly, the received signal $\mathbf y=[y_1, \cdots, y_N]\in \mathbb C^{N\times 1}$ at the BS can be written as
\begin{align}
    \mathbf y=\sum_{k=1}^K \mathbf h_k x_k+\mathbf n,
\end{align}
where $\mathbf n\sim \mathcal{CN}(\mathbf 0, \sigma^2 \mathbf I_N)$ is the additive white Gaussian noise (AWGN) at the BS.
A beamforming vector $\mathbf w_k$ with $\mathbf w_k\hermconj \mathbf w_k=1$ is applied for the received signal from each user $k$. Thus, the estimated symbol at the BS for user $k$ is given by
\begin{align}
    \hat y_k=\mathbf w_k\hermconj \mathbf y=\mathbf w_k\hermconj \mathbf h_k x_k+ \sum_{i=1, i\neq k}^K \mathbf w_k\hermconj \mathbf h_{i}x_{i} +\mathbf w_k\hermconj \mathbf n.
\end{align}
Accordingly, the achievable spectral efficiency of user $k$ in terms of bps/Hz is given by
\begin{align}
    R_k=\log_2\left(1+\frac{p_k |\mathbf w_k\hermconj \mathbf h_k|^2}{\sum_{i=1, i\neq k}^{K}p_{i}|\mathbf w_k\hermconj \mathbf h_{i}|^2+\sigma^2}\right).
\end{align}

Let $B$ denote the bandwidth of the considered system, and $T$ the total transmission time. Thus, the total number of data samples for user $k$'s task is given by
\begin{align}
    v_k=\floor*{\frac{BTR_k}{D_k}}\approx  \frac{BTR_k}{D_k},
\end{align}
where $D_k$ is the number of bits for each data sample, and the approximation is due to $\floor*{x}\to x$ when $x\gg 1$.

\begin{Remark}
In practice, the channel information $\mathbf h_k$,  may not be available at the BS. In such a case, geometry-based ray tracing techniques \cite{He2019raytracing} could be used to estimate the wireless channel.
The geometry can be further obtained by executing the environment sensing task at the edge (e.g., the 3D detection in Section V.D).
This indicates that the proposed system supports not only the ``communication for learning’’ paradigm (e.g., RIS helps the edge to collect more data), but also the ``learning for communication’’ paradigm (e.g., more data helps to better estimate the wireless channel), which is a win-win case for both communication and learning tasks.
\end{Remark}

\section{Problem Formulation}
In contrast with the conventional communication systems where the principal design criterion is usually to maximize the throughput, edge ML systems aim at maximizing the learning performance. Specifically, in the edge ML system considered herein, we aim at optimizing the maximum learning error of all the participating users by jointly optimizing the transmit power $\mathbf p=[p_1, \cdots, p_K]\trans$, the beamforming vectors $\{\mathbf w_k\}_{k=1}^K$ at the BS, and the phase-shift matrix $\mathbf \Theta$ of the RIS. Given the total power budget of users $P$, the vector $\mathbf p$ should satisfy $\sum_{k=1}^K p_k\leq P$. Thus, we have the following optimization problem.
\begin{subequations}
\begin{align}
    \mathcal P: \min_{\mathbf p, \{\mathbf w_k\}_{k=1}^K,\mathbf{\Theta}, \mathbf v} \quad &\max_{k=1,\cdots,K}\quad\Psi_k(v_k) \nonumber\\
    \st \quad\quad &\sum_{k=1}^K p_k\leq P,\  p_k\geq 0, \quad k=1, \cdots, K, \label{eq:pow_constraint}\\
    &\mathbf w_k\hermconj \mathbf w_k=1, \quad k=1,\cdots,K, \label{eq:bf}\\
    &\frac{BTR_k}{D_k}=v_k, \quad k=1, \cdots, K, \label{eq:v_m}\\
    &0\leq \varphi_m<2\pi,\quad m=1, \cdots, M,
\end{align}
\end{subequations}
where $\Psi_k(v_k)$ is the classification error of learning model $k$ given the sample size $v_k$. In general, the functions $\{\Psi_1, \cdots, \Psi_K\}$ can hardly be expressed analytically. Propitiously, their approximate expressions can be obtained based on the analysis in \cite{Wang2020EdgeLearningTWC, johnson2018accuracypilotdata,BELEITES2013samplesize}. Here, we simply adopt the non-linear model developed in \cite{Wang2020EdgeLearningTWC}, i.e.,
\begin{align}\label{eq:errormodel}
    \Psi_k(v_k)\approx c_k v_k^{-d_k},
\end{align}
where $c_k$ and $d_k$ are tuning parameters which can be obtained by curve fitting.

Denote $\bm\theta=[e^{j\varphi_1},\cdots, e^{j\varphi_M}]\trans\in\mathbb C^{M\times 1}$. We have $|\theta_m|=1$ for all $m$. Then, by substituting (\ref{eq:v_m}) and (\ref{eq:errormodel}) into the objective function, problem $\mathcal P$ is transformed into the following problem.
\begin{subequations}
    \begin{align}
        \mathcal P1: \min_{\mathbf p,\{\mathbf w_k\}_{k=1}^K,\mathbf{\Theta}}\max_{k=1,\cdots,K} &c_k\left[\frac{BT}{D_k}\log_2\left(1+\frac{|\mathbf w_k\hermconj(\mathbf h_{\text{d},k}+\mathbf G\hermconj \mathbf{\Theta}\hermconj \mathbf h_{\text{r},k})|^2p_k}{\sum_{i=1, i\neq k}^K |\mathbf w_k\hermconj (\mathbf h_{\text{d},i}+\mathbf G\hermconj \mathbf{\Theta}\hermconj \mathbf h_{\text{r},i})|^2 p_{i}+\sigma^2}\right)\right]^{-d_k}\nonumber\\
        \st \quad\quad &\sum_{k=1}^K p_k\leq P,\ p_k\geq 0,\quad k=1,\cdots, K,\\
        &\mathbf w_k\hermconj \mathbf w_k=1, \quad k=1,\cdots,K, \label{eq:beam}\\
        &|\theta_m|=1,\quad m=1,\cdots,M.
    \end{align}
\end{subequations}

\begin{Remark}[Scaling law of large number of reflecting elements with a single user and single-antenna BS]
To gain some insights on how the number of reflecting elements affect the learning accuracy, we consider the case with a single user and a single-antenna BS, i.e., $K=1$ and $N=1$, and ignore the direct link. Thus, $\mathbf G$ becomes a vector and is denoted by $\mathbf g$.  The received signal-to-noise ratio (SNR) becomes $p|\mathbf h_{\text{r}}\hermconj \mathbf\Theta\mathbf g|/{\sigma}^2$. Assume $\mathbf\Theta=\mathbf I_M$, $\mathbf h_{\text{r}}\sim\mathcal{CN}(\mathbf 0, \varrho_h^2\mathbf I_M)$, and $\mathbf g\sim\mathcal{CN}(\mathbf 0, \varrho_g^2\mathbf I_M)$. According to the central limit theorem, we have $\mathbf h_{\text{r}}\hermconj\mathbf g\sim\mathcal{CN}(\mathbf 0, M\varrho_h^2\varrho_g^2)$ as $M\to\infty$. Thus, the average  received SNR is $p\mathbb|\mathbf h_{\text{r}}\hermconj \mathbf\Theta\mathbf g|/{\sigma}^2\sim Mp\varrho_h^2\varrho_g^2$. This indicates that the learning error is asymptotically proportional to $(\log_2(M))^{-d}$.
\end{Remark}

\section{Joint Power Control and Phase-Shifter Design}
Note that problem $\mathcal P1$ is highly nonconvex due to the nonlinear learning error model in the objective function and the unit-modulus constraints. Moreover, the large number of optimization variables make the problem even more untractable. Fortunately, the optimization of transmit power, beamforming vectors and the phase-shift matrix can be decomposed. Hence, we adopt an AO-based algorithm to solve $\mathcal P1$ in an iterative manner via alternatively optimizing $\mathbf p$, $\{\mathbf w_k\}_{k=1}^K$ and $\mathbf\Theta$.

\subsection{Transmit Power Optimization}
When $\{\mathbf w_k\}_{k=1}^K$ and $\mathbf \Theta$ are fixed, the composite channel gains are known, and there are only the power budget constraints. Thus, the optimization of $\mathbf p$ can be written as
\begin{subequations}\label{prob:p_opt}
    \begin{align}
        \mathcal P_{\mathbf p}: \min_{\mathbf p}\quad\max_{k\in\mathcal K} &\quad \Phi_k(\mathbf p)\\
        \st \quad\quad &\sum_{k=1}^K p_k\leq P,\ p_k\geq 0,\quad k=1,\cdots, K,
    \end{align}
\end{subequations}
where
\begin{align}
    \Phi_k(\mathbf p)=c_k\left[\frac{BT}{D_k}\log_2\left(1+\frac{|\mathbf w_k\hermconj(\mathbf h_{\text{d},k}+\mathbf G\hermconj \mathbf{\Theta}\hermconj \mathbf h_{\text{r},k})|^2p_k}{\sum_{i=1, i\neq k}^K |\mathbf w_k\hermconj(\mathbf h_{\text{d},i}+\mathbf G\hermconj \mathbf{\Theta}\hermconj \mathbf h_{\text{r},i})|^2 p_{i}+\sigma^2}\right)\right]^{-d_k}.
\end{align}
It is observed that the objective function of problem $\mathcal P_{\mathbf p}$ is nonconvex in $\mathbf p$. We adopt the SCA approach to solve problem $\mathcal P_{\mathbf p}$, where a sequence of convex upper bounds \{$\widetilde{\Phi}_k(\mathbf p)$\} are constructed to approximate ${\Phi}_k(\mathbf p)$. Specifically, given any feasible solution $\mathbf p^{\star}$ to $\mathcal P_{\mathbf p}$, we define the surrogate functions
\begin{align}\label{eq:power_bound}
    \widetilde{\Phi}_k(\mathbf p|\mathbf p^{\star})=&c_k\bigg\{\frac{BT}{D_k \ln2}\bigg[\ln\left(\sum_{i=1}^K|\mathbf w_k\hermconj(\mathbf h_{\text{d},i}+\mathbf G\hermconj \mathbf{\Theta}\hermconj \mathbf h_{\text{r},i})|^2 p_{i}+\sigma^2\right)\nonumber\\
    &-\frac{\sum_{i=1,i\neq k}^K|\mathbf w_k\hermconj(\mathbf h_{\text{d},i}+\mathbf G\hermconj \mathbf{\Theta}\hermconj \mathbf h_{\text{r},i})|^2 p_{i}+\sigma^2}{\sum_{i=1,i\neq k}^K|\mathbf w_k\hermconj(\mathbf h_{\text{d},i}+\mathbf G\hermconj \mathbf{\Theta}\hermconj \mathbf h_{\text{r},i})|^2 p_{i}^{\star}+\sigma^2}\nonumber\\
    &-\ln\bigg(\sum_{i=1,i\neq k}^K|\mathbf w_k\hermconj(\mathbf h_{\text{d},i}+\mathbf G\hermconj \mathbf{\Theta}\hermconj \mathbf h_{\text{r},i})|^2 p_{i}^{\star}+\sigma^2\bigg)+1\bigg]\bigg\}^{-d_k},\ k=1,\cdots,K,
\end{align}
which have the following properties.
\begin{Proposition}\label{prop:power}
    The functions $\{\widetilde{\Phi}_k\}$ satisfy the following conditions:\\
    \begin{enumerate}
        \item Upper bound condition: $\widetilde\Phi_k(\mathbf p|\mathbf p^{\star})\geq \Phi_k(\mathbf p)$;
        \item Convexity: $\widetilde\Phi_k(\mathbf p|\mathbf p^{\star})$ is convex in $\mathbf p$;
        \item Local condition: $\widetilde{\Phi}_k(\mathbf p^{\star}|\mathbf p^{\star})=\Phi_k(\mathbf p^{\star})$ and $\nabla_{\mathbf p}\widetilde{\Phi}_k(\mathbf p^{\star}|\mathbf p^{\star})=\nabla_{\mathbf p}\Phi_k(\mathbf p^{\star})$.
    \end{enumerate}
\end{Proposition}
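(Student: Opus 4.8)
The plan is to recognize $\{\widetilde\Phi_k\}$ as the standard majorization–minimization surrogates obtained by linearizing the only nonconvex piece of $\Phi_k$. First I would fix the iterate $\mathbf p^\star$ and abbreviate the constant, nonnegative effective gains by $g_{k,i}=|\mathbf w_k\hermconj(\mathbf h_{\text{d},i}+\mathbf G\hermconj\mathbf\Theta\hermconj\mathbf h_{\text{r},i})|^2$, writing $J_k(\mathbf p)=\sum_{i\neq k}g_{k,i}p_i+\sigma^2$ and $I_k(\mathbf p)=\sum_{i}g_{k,i}p_i+\sigma^2$ for the interference-plus-noise and the total received power, both affine and strictly positive on the feasible set. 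Then $R_k=\frac{1}{\ln 2}\bigl(\ln I_k(\mathbf p)-\ln J_k(\mathbf p)\bigr)$ is a difference of two concave functions, and $\Phi_k=c_k\bigl(\tfrac{BT}{D_k}R_k\bigr)^{-d_k}$ with $\tfrac{BT}{D_k}R_k>0$. The single source of nonconvexity is the subtracted concave term $\ln J_k$, and the whole construction amounts to replacing it by its tangent plane at $\mathbf p^\star$.

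For the upper bound condition, since $\ln J_k$ is concave and $J_k$ affine, its first-order Taylor expansion at $\mathbf p^\star$ gives the tangent-plane bound $\ln J_k(\mathbf p)\le \ln J_k(\mathbf p^\star)+\frac{J_k(\mathbf p)}{J_k(\mathbf p^\star)}-1$, where I have used $\nabla J_k^\trans(\mathbf p-\mathbf p^\star)=J_k(\mathbf p)-J_k(\mathbf p^\star)$. Substituting this into $R_k$ produces a concave lower bound $\widetilde R_k(\mathbf p\mid\mathbf p^\star)=\frac{1}{\ln 2}\bigl(\ln I_k(\mathbf p)-\frac{J_k(\mathbf p)}{J_k(\mathbf p^\star)}-\ln J_k(\mathbf p^\star)+1\bigr)\le R_k$, which is exactly the bracketed quantity inside $\widetilde\Phi_k$. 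Because $t\mapsto t^{-d_k}$ is strictly decreasing for $t>0$ (as $d_k>0$), the rate lower bound flips into the error upper bound $\widetilde\Phi_k=c_k(\tfrac{BT}{D_k}\widetilde R_k)^{-d_k}\ge c_k(\tfrac{BT}{D_k}R_k)^{-d_k}=\Phi_k$.

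For convexity, I would note that $\widetilde R_k$ is concave in $\mathbf p$: $\ln I_k$ is a log-of-affine (concave) term and $-\frac{J_k(\mathbf p)}{J_k(\mathbf p^\star)}$ is affine, the remaining terms being constants in $\mathbf p$. The outer map $t\mapsto t^{-d_k}$ is convex and nonincreasing on $t>0$ (its second derivative $d_k(d_k+1)t^{-d_k-2}>0$), so by the composition rule for a convex nonincreasing function of a concave function, $\widetilde\Phi_k/c_k=(\tfrac{BT}{D_k}\widetilde R_k)^{-d_k}$ is convex, and scaling by $c_k>0$ preserves convexity.

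For the local condition, the tangent plane matches both the value and the gradient of $\ln J_k$ at $\mathbf p^\star$; hence $\widetilde R_k(\mathbf p^\star\mid\mathbf p^\star)=R_k(\mathbf p^\star)$ and $\nabla_{\mathbf p}\widetilde R_k(\mathbf p^\star\mid\mathbf p^\star)=\nabla_{\mathbf p}R_k(\mathbf p^\star)$, since the $\frac{\nabla J_k}{J_k(\mathbf p^\star)}$ term coincides with $\frac{\nabla J_k}{J_k}$ evaluated at $\mathbf p^\star$. Applying the chain rule to $\Phi_k$ and $\widetilde\Phi_k$, the matching inner values make the common scalar prefactor identical at $\mathbf p^\star$, which yields $\widetilde\Phi_k(\mathbf p^\star\mid\mathbf p^\star)=\Phi_k(\mathbf p^\star)$ and $\nabla_{\mathbf p}\widetilde\Phi_k(\mathbf p^\star\mid\mathbf p^\star)=\nabla_{\mathbf p}\Phi_k(\mathbf p^\star)$. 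The only delicate point is bookkeeping rather than depth: one must track that the negative exponent reverses the inequality direction in the first part, and confirm that the rearrangement of the tangent plane reproduces the stated surrogate form exactly (including the $+1$ and the ratio $J_k(\mathbf p)/J_k(\mathbf p^\star)$); everything else follows from standard convexity and MM arguments.
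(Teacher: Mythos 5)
Your proposal is correct and follows essentially the same route as the paper: the first-order tangent bound $\ln x\le\ln x_0+\tfrac{x}{x_0}-1$ on the concave $\ln$ of the affine interference term, monotonicity of $t\mapsto c_k t^{-d_k}$ to flip the rate lower bound into an error upper bound, and the composition rule (convex nonincreasing outer, concave inner) for convexity. The only cosmetic difference is in the local condition, where the paper writes out both gradients explicitly and substitutes $\mathbf p=\mathbf p^{\star}$, while you argue the same equality via the chain rule from the first-order match of the inner functions; both are valid.
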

\begin{proof}
    Please refer to Appendix A.
\end{proof}

Replacing the objective function of problem $\mathcal P_{\mathbf p}$ with the surrogate functions and applying the SCA algorithm, we have a sequence of optimization problems.
\begin{subequations}
    \begin{align}
        \mathcal P_{\mathbf p}[n+1]: \min_{\mathbf p}\quad &\max_{k\in\mathcal K}\widetilde\Phi_k(\mathbf p|\mathbf p^{[n]})\nonumber\\
        \st \quad &\sum_{k=1}^Kp_k\leq P,\ p_k\geq 0,\ \forall k=1,\cdots, K,
    \end{align}
\end{subequations}
where $\mathcal P_{\mathbf p}[n+1]$ is the optimization problem in the $(n+1)$-th iteration of the SCA algorithm, and $\mathbf p^{[n]}$ is the optimal solution of $\mathcal P_{\mathbf p}[n]$. Note that each $\mathcal P_{\mathbf p}[n]$ is a convex optimization problem and can be solved efficiently via off-the-shelf toolbox (e.g. CVX). According to Theorem 1 of \cite{BMarks1978InnerApprox}, the sequence $(\mathbf p^{[0]},\mathbf p^{[1]},\cdots)$ converges to the KKT solution to problem $\mathcal P_{\mathbf p}$ for any feasible starting point $\mathbf p^{[0]}$. As a summary, the SCA-based optimization procedure is given in Algorithm \ref{alg:MMpower}.

\begin{algorithm}
\caption{SCA-based algorithm for optimizing $\mathbf p$}\label{alg:MMpower}
\begin{algorithmic}[1]
\State \textbf{Input} $(M,N,K,P,B,T,\mathbf w_k, \mathbf \Theta,\mathbf h_k, ,\sigma^2,c_k,d_k,D_k)$ for $k=1,\cdots,K$.
\State \textbf{Initialize} $\mathbf p^{[0]}=P/K\mathbf 1_K$. Set iteration counter $n=0$.
\State \textbf{Repeat}\\
\quad Update $\mathbf p^{[n+1]}$ by solving $\mathcal P_{\mathbf p}[n+1]$ via CVX.\\
\quad $n\leftarrow n+1.$
\State \textbf{Until} convergence.
\State \textbf{Output} $\mathbf p^{\diamond}=\mathbf p^{[n]}$.
\end{algorithmic}
\end{algorithm}

\subsection{Beamforming Vectors Optimization}

Note that given $\mathbf\Theta$ and $\mathbf p$, the objective function of the original problem $\mathcal P1$ is still nonconvex in $\mathbf w_k$. However, since the objective function is monotonically decreasing in the SINR of each user and $\{\mathbf w_k\}_{k=1}^K$ are decomposable, the optimization of $\mathbf w_k$ with fixed $\mathbf p$ and $\mathbf \Theta$ can be equivalently solved by maximizing the SINR of each user $k$. Consequently, the optimal beamforming vectors can be obtained by solving the following $K$ subproblems.
\begin{subequations}
    \begin{align}
        \mathcal P_{\mathbf w_k}: \max_{\mathbf w_k} \quad &\frac{|\mathbf w_k\hermconj(\mathbf h_{\text{d},k}+\mathbf G\hermconj \mathbf{\Theta}\hermconj \mathbf h_{\text{r},k})|^2p_k}{\sum_{i=1, i\neq k}^K |\mathbf w_k\hermconj (\mathbf h_{\text{d},i}+\mathbf G\hermconj \mathbf{\Theta}\hermconj \mathbf h_{\text{r},i})|^2 p_{i}+\sigma^2}\nonumber\\
        \st\quad &\mathbf w_k\hermconj \mathbf w_k=1.
    \end{align}
\end{subequations}

Although each problem $\mathcal P_{\mathbf w_k}$ is still nonconvex in $\mathbf w_k$, its optimal solution can be achieved in closed-form as given in the following lemma.

\begin{Lemma}\label{lem:opt_bf}
    Given $\mathbf\Theta$ and $\mathbf p$, the optimal solution of $\mathcal P_{\mathbf w_k}$ for arbitrary $k$ is given in closed-form by
    \begin{align}\label{eq:optbeam}
        \mathbf w_k^{\diamond}=\frac{\left(\mathbf I_N+\sum_{i=1}^K\frac{p_{i}}{\sigma^2}\mathbf h_{i}\mathbf h_{i}\hermconj\right)^{-1}\mathbf h_k}{\left\|\left(\mathbf I_N+\sum_{i=1}^K\frac{p_{i}}{\sigma^2}\mathbf h_{i}\mathbf h_{i}\hermconj\right)^{-1}\mathbf h_k\right\|_2},
    \end{align}
    where $\mathbf h_i=\mathbf h_{\text{d},i}+\mathbf G\hermconj \mathbf{\Theta}\hermconj \mathbf h_{\text{r},i}$, for $i=1,\cdots,K$.
\end{Lemma}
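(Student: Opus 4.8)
The plan is to recognize $\mathcal P_{\mathbf w_k}$ as a generalized Rayleigh quotient and solve it in closed form. First I would absorb the noise term into the denominator using the unit-norm constraint, writing $\sigma^2=\sigma^2\,\mathbf w_k\hermconj\mathbf w_k$, so that the objective becomes
\begin{align}
\mathrm{SINR}_k=\frac{p_k\,\mathbf w_k\hermconj\mathbf h_k\mathbf h_k\hermconj\mathbf w_k}{\mathbf w_k\hermconj\mathbf B_k\mathbf w_k},\qquad \mathbf B_k\triangleq\sum_{i=1,i\neq k}^K p_i\mathbf h_i\mathbf h_i\hermconj+\sigma^2\mathbf I_N.
\end{align}
The matrix $\mathbf B_k$ is Hermitian positive definite (owing to the $\sigma^2\mathbf I_N$ term), hence invertible with a Hermitian square root $\mathbf B_k^{1/2}$. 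Crucially, this ratio is invariant under the scaling $\mathbf w_k\to\alpha\mathbf w_k$, so the constraint $\mathbf w_k\hermconj\mathbf w_k=1$ plays no role in selecting the optimal \emph{direction}; it only fixes the normalization, which is exactly what the denominator in (\ref{eq:optbeam}) accomplishes.

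Second I would maximize the unconstrained quotient. Substituting $\mathbf u=\mathbf B_k^{1/2}\mathbf w_k$ recasts the objective as $p_k\,|\mathbf u\hermconj\mathbf B_k^{-1/2}\mathbf h_k|^2/(\mathbf u\hermconj\mathbf u)$, and the Cauchy--Schwarz inequality shows this is maximized precisely when $\mathbf u\propto\mathbf B_k^{-1/2}\mathbf h_k$, i.e., when $\mathbf w_k=\mathbf B_k^{-1/2}\mathbf u\propto\mathbf B_k^{-1}\mathbf h_k$. Equivalently, since the numerator matrix $p_k\mathbf h_k\mathbf h_k\hermconj$ has rank one, the unique nonzero generalized eigenvector of the pencil $(p_k\mathbf h_k\mathbf h_k\hermconj,\mathbf B_k)$ is $\mathbf B_k^{-1}\mathbf h_k$. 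After normalizing to unit norm, the maximizer is therefore $\mathbf B_k^{-1}\mathbf h_k/\|\mathbf B_k^{-1}\mathbf h_k\|_2$.

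The final step, which is the only genuinely delicate point, is to reconcile this with the stated form (\ref{eq:optbeam}), whose matrix sums over \emph{all} $i$ including the self-term $i=k$. Writing $\mathbf C_k\triangleq\sigma^2\mathbf I_N+\sum_{i=1}^K p_i\mathbf h_i\mathbf h_i\hermconj=\mathbf B_k+p_k\mathbf h_k\mathbf h_k\hermconj$, I would apply the Sherman--Morrison identity to obtain
\begin{align}
\mathbf C_k^{-1}\mathbf h_k&=\mathbf B_k^{-1}\mathbf h_k-\frac{p_k\big(\mathbf h_k\hermconj\mathbf B_k^{-1}\mathbf h_k\big)}{1+p_k\,\mathbf h_k\hermconj\mathbf B_k^{-1}\mathbf h_k}\,\mathbf B_k^{-1}\mathbf h_k\nonumber\\
&=\frac{1}{1+p_k\,\mathbf h_k\hermconj\mathbf B_k^{-1}\mathbf h_k}\,\mathbf B_k^{-1}\mathbf h_k.
\end{align}
Since $\mathbf h_k\hermconj\mathbf B_k^{-1}\mathbf h_k>0$, the vector $\mathbf C_k^{-1}\mathbf h_k$ is a strictly positive scalar multiple of $\mathbf B_k^{-1}\mathbf h_k$, so the two point in the same direction and yield identical unit-norm beamformers. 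Finally, observing that the matrix appearing in (\ref{eq:optbeam}) equals $\sigma^{-2}\mathbf C_k$, its inverse is $\sigma^2\mathbf C_k^{-1}$, and the extra positive factor $\sigma^2$ is again washed out by the normalization, completing the identification. The main obstacle is thus not the optimization itself (a textbook Rayleigh-quotient argument) but verifying that folding the self-term $i=k$ into the sum leaves the optimal direction unchanged; the Sherman--Morrison computation is what makes this transparent.
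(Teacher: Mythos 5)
Your proof is correct, and its core is the same textbook generalized-Rayleigh-quotient argument the paper uses: absorb $\sigma^2$ via the unit-norm constraint, note scale invariance, whiten, and exploit the rank-one numerator. The one genuine difference is how the self-term $i=k$ is handled. You optimize against the interference-plus-noise matrix $\mathbf B_k=\sum_{i\neq k}p_i\mathbf h_i\mathbf h_i\hermconj+\sigma^2\mathbf I_N$, obtain the MVDR-style direction $\mathbf B_k^{-1}\mathbf h_k$, and then need the Sherman--Morrison step to show that $\mathbf C_k^{-1}\mathbf h_k$ (with the self-term folded in) is a positive scalar multiple of it. The paper instead applies a monotone transformation first: since the SINR is positive, maximizing $\mathbf w\hermconj\mathbf\Xi\mathbf w/(\mathbf w\hermconj\mathbf\Gamma\mathbf w-\mathbf w\hermconj\mathbf\Xi\mathbf w)$ is equivalent to maximizing $\mathbf w\hermconj\mathbf\Xi\mathbf w/(\mathbf w\hermconj\mathbf\Gamma\mathbf w)$ where $\mathbf\Gamma$ already contains all $K$ terms, so the stated formula with the full sum drops out directly with no matrix-inversion-lemma reconciliation. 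Your route costs one extra (correct and cleanly executed) identity but buys the familiar interpretation of the solution as the MMSE/MVDR receiver; the paper's route is slightly more economical for matching the closed form as written. Either way the argument is complete.
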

\begin{proof}
    Please refer to Appendix B.
\end{proof}

\subsection{Phase-shift Matrix Optimization}
Given transmit power $\mathbf p$ and beamforming vectors $\{\mathbf w_k\}_{k=1}^K$, there remain only the unit-modulus constraints of the RIS elements. By exploiting $\mathbf\Theta=\diag(\bm\theta)$ and setting $\mathbf a_{k,i}=\beta \diag(\mathbf h_{\text{r},i}\hermconj)\mathbf G\mathbf w_k$, $b_{k,i}=\mathbf h_{\text{d},i}\hermconj \mathbf w_k$, the optimization of phase-shift matrix $\mathbf\Theta$ can be equivalently written as the following problem.
\begin{subequations}
    \begin{align}
        \mathcal P_{\bm\theta}:  &\min_{\bm\theta}\quad\max_{k\in\mathcal K} c_k\left[\frac{BT}{D_k}\log_2
        \left(1+\frac{|\bm\theta\hermconj\mathbf a_{k,k}+b_{k,k}|^2p_k}{\sum_{i=1,i\neq k}^K|\bm\theta\hermconj\mathbf a_{k,i}+b_{k,i}|^2p_i+\sigma^2}\right)\right]^{-d_k}\nonumber\\
        &\st \quad |\theta_m|= 1, \forall m=1,\cdots,M.
    \end{align}
\end{subequations}

A common approach to address the nonconvex unit-modulus constraints is semidefinite relaxation (SDR). Nevertheless, even SDR can circumvent the nonconvex unit-modulus constraints, the objective function remains nonconvex due to the nonlinear learning error model. Moreover, the solution achieved by SDR generally does not conform to the rank-1 constraint, and large number of Gaussian randomizations \cite{Luo2010SDR} are required to find a rank-1 solution, which increases the complexity dramatically. Besides, SDR lifts the optimization variable from an $M\times 1$ vector to an $M\times M$ matrix.  Thus, SDR cannot scale up the number of RIS elements. To this end, we propose an ELS framework and an ADMM-based algorithm to solve problem $\mathcal P_{\bm\theta}$. Specifically, we first define the error level of the $k$-th ML task for all $k$ as
\begin{align}
\delta_k=c_k\left[\frac{BT}{D_k}\log_2
\left(1+\frac{|\bm\theta\hermconj\mathbf a_{k,k}+b_{k,k}|^2p_k}{\sum_{i=1,i\neq k}^K|\bm\theta\hermconj\mathbf a_{k,i}+b_{k,i}|^2p_i+\sigma^2}\right)\right]^{-d_k}.
\end{align}
Thus, the maximum error level of all participating tasks is given by $\delta=\max_{k\in\mathcal K} \delta_k$. Then, for a given error level $\delta$, problem $\mathcal P_{\bm\theta}$ can be equivalently transformed to the following feasibility problem.
\begin{subequations}
    \begin{align}
        \mathcal P'_{\bm\theta}:  &\find\quad\bm\theta\\
        &\st\quad\frac{|\bm\theta\hermconj\mathbf a_{k,k}+b_{k,k}|^2p_k}{\sum_{i=1,i\neq k}^K|\bm\theta\hermconj\mathbf a_{k,i}+b_{k,i}|^2p_i+\sigma^2}\geq \gamma_k,\quad k=1,\cdots,K\\
        &\quad\quad\quad |\theta_m|= 1, \quad m=1,\cdots,M,
    \end{align}
\end{subequations}
where $\gamma_k=2^{\frac{D_k\left(\frac{c_k}{\delta}\right)^{\frac{1}{d_k}}}{BT}}-1$. If problem $\mathcal P_{\bm\theta}'$ is feasible, we can reduce $\delta$; otherwise, we increase $\delta$ to make $\mathcal P_{\bm\theta}'$ feasible, until $\delta$ converges to a certain value.

In the sequel, we design an ADMM-based algorithm to solve problem $\mathcal P_{\bm\theta}'$. By introducing a series of auxiliary variables $\{\mathbf q_k\}_{k=1}^K$ and a new constraint $\mathbf q_1=\mathbf q_2=\cdots=\mathbf q_K=\bm\theta$, problem $\mathcal P_{\bm\theta}'$ can be further rewritten as the following form.
\begin{subequations}
    \begin{align}
        \find &\quad\{\mathbf q_k\}_{k=1}^K, \bm\theta\\
        \st &\frac{|\mathbf q_k\hermconj\mathbf a_{k,k}+b_{k,k}|^2p_k}{\sum_{i=1,i\neq k}^K|\mathbf q_k\hermconj\mathbf a_{k,i}+b_{k,i}|^2p_i+\sigma^2}\geq \gamma_k,\quad k=1,\cdots,K\label{eq:snr_constraint}\\
        &|\theta_m|= 1, \quad m=1,\cdots,M\label{eq:norm_constraint}\\
        &\mathbf q_k=\bm\theta, \quad k=1,\cdots,K.
    \end{align}
    \label{prob:theta_admm}
\end{subequations}
The augmented Lagrangian (using the scaled dual variable) of problem (\ref{prob:theta_admm}) is given by
\begin{align}
    \mathcal L_{\rho}(\mathbf q_1,\cdots,\mathbf q_K,\bm\theta,\mathbf u_1,\cdots,\mathbf u_K)=\quad\sum_{k=1}^K \mathbb I_{\mathcal B_k}(\mathbf q_k)+ \mathbb I_{\mathcal C}(\bm\theta)+\rho\sum_{k=1}^K\|\mathbf q_k-\bm\theta+\mathbf u_k\|^2,
\end{align}
where $\mathcal B_k$ is the feasibility region of the $k$-th constraint in (\ref{eq:snr_constraint}) and $\mathcal C$ is the feasibility region of constraint (\ref{eq:norm_constraint}), $\rho>0$ is the penalty parameter, and $\mathbf u_k$ is the scaled dual variable.
Moreover, $\mathbb I$ is the indicator function with
\begin{align}
    \mathbb I_{\mathcal X}(\mathbf x)=
    \begin{cases}
        0, & \text{If} \quad\mathbf x\in\mathcal X,\\
        +\infty, &\text{Otherwise.}
    \end{cases}
\end{align}
The ADMM algorithm iteratively update $\mathbf q_k$, $\bm\theta$ and $\mathbf u_k$ as follows, until a feasible solution is found.
\begin{subequations}\label{eq:admm_update}
    \begin{align}
        &\mathbf q_k^{t+1}:=\argmin_{\mathbf q_k} \mathcal L_{\rho}(\mathbf q_1,\cdots,\mathbf q_K,\bm\theta^t,\mathbf u_1^t,\cdots,\mathbf u_K^t), k=1,\cdots,K\\
        &\bm\theta^{t+1}:=\argmin_{\bm\theta} \mathcal L_{\rho}(\mathbf q_1^{t+1},\cdots,\mathbf q_K^{t+1},\bm\theta,\mathbf u_1^t,\cdots,\mathbf u_K^t)\\
        &\mathbf u_k^{t+1}:=\mathbf u_k^t+\mathbf q_k^{t+1}-\bm\theta^{t+1},k=1,\cdots,K
    \end{align}
\end{subequations}

In the sequel, we show that each update in (\ref{eq:admm_update}) can be efficiently solved either in closed-form or with very low complexity.

1) $\mathbf q_k$ update: The update of $\mathbf q_k$ can be equivalently written as the following problem after removing the irrelevant terms.
\begin{align}\label{eq:q-update}
    \mathbf q_k^{t+1}=\argmin_{\mathbf q_k}\quad\sum_{k=1}^K \mathbb I_{\mathcal A_k}(\mathbf q_k)+\rho\sum_{k=1}^K\|\mathbf q_k-\bm\theta^t+\mathbf u_k^t\|^2.
\end{align}
Note that the update of $\mathbf q_k$ can be decoupled into $K$ subproblems for each $k\in\mathcal K$.
\begin{subequations}
\begin{align}
    \min_{\mathbf q_k} \quad&\|\mathbf q_k-\bm\theta^t+\mathbf u_k^t\|^2\\
    \st \quad&\frac{|\mathbf q_k\hermconj\mathbf a_{k,k}+b_{k,k}|^2p_k}{\sum_{i=1,i\neq k}^K|\mathbf q_k\hermconj\mathbf a_{k,i}+b_{k,i}|^2p_i+\sigma^2}\geq \gamma_k.
\end{align}
\label{prob:q_update}
\end{subequations}

Although problem (\ref{prob:q_update}) is nonconvex in general, strong duality holds and the Lagrangian relaxation produces the optimal solution since there is only one constraint \cite{boyd2004convex}. Thus, we can solve it efficiently using the Lagrangian dual method. Rephrasing problem (\ref{prob:q_update}), it can be equivalently written as the following compact form
\begin{subequations}
    \begin{align}
        \min_{\mathbf q_k} \quad&\|\mathbf q_k-\bm\zeta_k^t\|^2\\
        \st \quad&\mathbf q_k\hermconj \mathbf A_k\mathbf q_k-2\re\{\mathbf b_k\hermconj\mathbf q_k\}= \tau_k,
    \end{align}
    \label{prob:q_qcqp1}
\end{subequations}
where $\bm\zeta_k^t=\bm\theta^t-\mathbf u_k^t$, $\mathbf A_k=\gamma_k\sum_{i=1,i\neq k}^K\mathbf a_{k,i}\mathbf a_{k,i}\hermconj p_i-\mathbf a_{k,k}\mathbf a_{k,k}\hermconj p_k$, $\mathbf b_k=\mathbf a_{k,k}b_{k,k}^* p_k-\gamma_k\sum_{i=1,i\neq k}^K\mathbf a_{k,i}b_{k,i}^*p_i$, and $\tau_k=|b_{k,k}|^2p_k-\gamma_k\sum_{i=1,i\neq k}^K|b_{k,i}|^2p_i-\gamma_k\sigma^2$. Note that we have changed the constraint to equality to simplify the follow-up derivations. When considering the inequality constraint, we can just check whether $\mathbf q_k=\bm\zeta_k^t$ is feasible. If yes, $\mathbf q_k^*=\bm\zeta_k^t$ is the optimal solution; if not, the optimal solution must satisfy the equality constraint.

For ease of notation, we neglect the subscript $k$ in problem (\ref{prob:q_qcqp1}), and let $\mathbf A=\mathbf Q\mathbf\Lambda\mathbf Q\hermconj$ be the eigenvalue decomposition. Then, problem (\ref{prob:q_qcqp1}) is equivalent to
\begin{subequations}
    \begin{align}
        \min_{\tilde{\mathbf q}}\quad&\|\tilde{\mathbf q}-\tilde{\bm\zeta}^t\|^2\\
        \st \quad&\tilde{\mathbf q}\hermconj \mathbf\Lambda\tilde{\mathbf q}-2\re\{\tilde{\mathbf b}\hermconj\tilde{\mathbf q}\}= \tau,
    \end{align}
    \label{prob:q_qcqp1_equ}
\end{subequations}
where $\tilde{\mathbf q}=\mathbf Q\hermconj\mathbf q$, $\tilde{\bm\zeta}^t=\mathbf Q\hermconj\bm\zeta^t$, and $\tilde{\mathbf b}=\mathbf Q\hermconj\mathbf b$.

As a result, the optimal solution can be efficiently found by the following lemma.
\begin{Lemma}\label{lem:opt_q}
    The optimal solution of problem (\ref{prob:q_qcqp1_equ}) is given by
    \begin{align}\label{eq:q-opt}
        \tilde{\mathbf q}^*=(\mathbf I+\mu\mathbf\Lambda)^{-1}(\tilde{\bm\zeta}+\mu\tilde{\mathbf b})
    \end{align}
    where $\mu$ is the Lagrangian multiplier of problem (\ref{prob:q_qcqp1_equ}). Moreover, $\mu$ can be found by solving a nonlinear equation $\chi(\mu)=0$ with
    \begin{align}
        \chi(\mu)=\sum_{m=1}^M\lambda_m\left|\frac{\tilde{\zeta}_m+\mu\tilde b_m}{1+\mu\lambda_m}\right|^2-2\re\left\{\sum_{m=1}^M\tilde b_m^*\frac{\tilde{\zeta}_m+\mu\tilde b_m}{1+\mu\lambda_m}\right\}-\tau,
    \end{align}
    where $\lambda_m$ is the $m$-th diagonal entry of $\mathbf\Lambda$.
\end{Lemma}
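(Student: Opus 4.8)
The plan is to solve problem (\ref{prob:q_qcqp1_equ}) through its Lagrangian dual. Because this is a QCQP with a single quadratic (equality) constraint, strong duality holds — as already invoked in the text via \cite{boyd2004convex} — so the global primal optimum coincides with a stationary point of the Lagrangian evaluated at the dual-optimal multiplier. First I would form the scaled Lagrangian with a single real multiplier $\mu$,
\begin{align}
\mathcal L(\tilde{\mathbf q},\mu)=\|\tilde{\mathbf q}-\tilde{\bm\zeta}^t\|^2+\mu\left(\tilde{\mathbf q}\hermconj\mathbf\Lambda\tilde{\mathbf q}-2\re\{\tilde{\mathbf b}\hermconj\tilde{\mathbf q}\}-\tau\right).
\end{align}
Collecting quadratic and linear terms, this equals $\tilde{\mathbf q}\hermconj(\mathbf I+\mu\mathbf\Lambda)\tilde{\mathbf q}-2\re\{(\tilde{\bm\zeta}^t+\mu\tilde{\mathbf b})\hermconj\tilde{\mathbf q}\}$ up to constants independent of $\tilde{\mathbf q}$, using that $\mu$ is real.

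Next I would impose stationarity. Treating $\tilde{\mathbf q}$ and $\tilde{\mathbf q}^*$ as independent (Wirtinger calculus) and setting $\partial\mathcal L/\partial\tilde{\mathbf q}^*=\mathbf 0$ gives the linear system $(\mathbf I+\mu\mathbf\Lambda)\tilde{\mathbf q}=\tilde{\bm\zeta}^t+\mu\tilde{\mathbf b}$. Whenever $\mathbf I+\mu\mathbf\Lambda$ is invertible this yields exactly (\ref{eq:q-opt}). Since $\mathbf\Lambda=\diag(\lambda_1,\dots,\lambda_M)$ is diagonal, the inverse is diagonal as well, so the solution decouples entrywise as $\tilde q_m=(\tilde\zeta_m+\mu\tilde b_m)/(1+\mu\lambda_m)$.

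Then I would pin down $\mu$ via primal feasibility. Substituting the entrywise solution into the equality constraint $\sum_m\lambda_m|\tilde q_m|^2-2\re\{\sum_m\tilde b_m^*\tilde q_m\}=\tau$ and rearranging reproduces precisely the scalar equation $\chi(\mu)=0$ stated in the lemma. The admissible multiplier is the one for which $\mathbf I+\mu\mathbf\Lambda\succeq\mathbf 0$ (so the stationary point minimizes $\mathcal L$ rather than being a saddle), i.e., $\mu$ lying in the interval $[-1/\lambda_{\max},-1/\lambda_{\min}]$ fixed by the largest and smallest eigenvalues of $\mathbf A$; within this interval the root can be located by a one-dimensional search (e.g.\ bisection), since $\chi$ is continuous and monotone away from its poles.

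The main obstacle I anticipate is not the stationarity computation, which is routine, but the analysis of $\chi(\mu)$: I must verify that a root exists in the admissible interval where $\mathbf I+\mu\mathbf\Lambda$ is positive semidefinite, treat the poles at $\mu=-1/\lambda_m$ carefully, and confirm that this root corresponds to the global minimizer guaranteed by strong duality rather than to a spurious stationary point outside the feasible multiplier range. Once that interval is correctly delineated, the monotone behavior of $\chi$ on it renders the root unique and numerically straightforward to obtain, completing the proof.
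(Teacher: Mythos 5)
Your proposal is correct and follows essentially the same route as the paper's Appendix C: invoke strong duality for the single-constraint QCQP, set the gradient of the Lagrangian to zero to obtain $\tilde{\mathbf q}^*=(\mathbf I+\mu\mathbf\Lambda)^{-1}(\tilde{\bm\zeta}+\mu\tilde{\mathbf b})$, and substitute back into the equality constraint to get $\chi(\mu)=0$. The extra points you raise about the admissible interval $\mathbf I+\mu\mathbf\Lambda\succeq\mathbf 0$ and the monotonicity of $\chi$ are exactly what the paper handles in the main text immediately after the lemma (via the computation of $\chi'(\mu)$ and the initialization in Algorithm 2), so nothing is missing.
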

\begin{proof}
    Please refer to Appendix C.
\end{proof}

Taking derivative on $\chi(\mu)$ with respect to $\mu$, we have
\begin{align}
    \chi'(\mu)=-2\sum_{m=1}^M\frac{|\tilde b_m-\lambda_m\tilde\zeta_m|^2}{(1+\mu\lambda_m)^3}.
\end{align}
Since we assume the feasibility of problem (\ref{prob:q_qcqp1_equ}), there must exist $\mu$ with $\mathbf I+\mu\mathbf\Lambda\succeq 0$, such that value of $\tilde{\mathbf q}$ minimizing the Lagrangian also satisfies the equality constraint. Thus, $1+\mu\lambda_m\geq 0, \ m=1,\cdots,M$, and $\chi'(\mu)<0$. Therefore, $\chi(\mu)$ is monotonic in the possible region of the solution, and any local solution is guaranteed to be the unique solution. Moreover, the equation $\chi(\mu)=0$ can be efficiently solved by either bisection search method or Newton's method as detailed in Algorithm \ref{alg:Newton}.

\begin{algorithm}
    \caption{Solving $\chi(\mu)=0$ using Newton's method}\label{alg:Newton}
    \begin{algorithmic}[1]
    \State \textbf{Set} $\mu=-(\lambda_{\min}+\lambda_{\max})/2\lambda_{\min}\lambda_{\max}$, $\varepsilon=10^{-6}$.
    \State \textbf{Repeat}\\
    \quad $\mu=\mu-\chi(\mu)/\chi'(\mu)$.
    \State \textbf{Until} $-\chi(\mu)^2/\chi'(\mu)<\varepsilon$.
    \State \textbf{Output} $\mu$.
    \end{algorithmic}
\end{algorithm}

After obtaining $\tilde{\mathbf q}_k$ from problem (\ref{prob:q_qcqp1_equ}), the optimal $\mathbf q_k$ update is given by
\begin{align}
    \mathbf q_k^{t+1}=\mathbf Q\tilde{\mathbf q}_k.
\end{align}

2) $\bm\theta$ update: The update of $\bm\theta$ can be obtained by solving the following problem.
\begin{align}
    \bm\theta^{t+1}&=\argmin_{\bm\theta} \sum_{k=1}^K\|\mathbf q_k^{t+1}-\bm\theta+\mathbf u_k^t\|^2\nonumber\\
    &\st \quad|\theta_m|=1, m=1,\cdots,M.
\end{align}
Thus, the optimal $\bm\theta$ is simply the projection of $\frac{1}{K}\sum_{k=1}^K(\mathbf q_k^{t+1}+\mathbf u_k^t)$ onto the unit-modulus constraints, i.e.,
\begin{align}\label{eq:theta_update}
    \bm\theta^{t+1}=e^{j\angle \frac{1}{K}\sum_{k=1}^K(\mathbf q_k^{t+1}+\mathbf u_k^t)}.
\end{align}

3) $\mathbf u_k$ update: The optimal update of $\mathbf u_k$ can be derived by setting the derivative of the augmented Lagrangian to zero. Thus, we have
\begin{align}\label{eq:u_update}
    \mathbf u_k^{t+1}=\mathbf u_k^t+\mathbf q_k^{t+1}-\bm\theta^{t+1}.
\end{align}

\begin{algorithm}
    \caption{ELS and ADMM-based optimization of phase-shift matrix}\label{alg:ADMM}
    \begin{algorithmic}[1]
    \State \textbf{Set} $\delta_{\min}=0$, $\delta_{\max}=1$, maximum number of iterations $\mathsf{nIter}=1000$, threshold $\epsilon=10^{-6}$, and feasibility indicator $\mathsf{feas}=0$.
    \State \textbf{While} $\delta_{\max}-\delta_{\min}>10^{-4}$\\
    \quad Set $\delta=\frac{\delta_{\min}+\delta_{\max}}{2}$, and substitute current $\delta$ to problem (\ref{prob:theta_admm}).
    \State \quad \textbf{For} $i=1,\cdots,\mathsf{nIter}$\\
    \quad \quad Update $\bm\theta$ via equation (\ref{eq:theta_update}).\\
    \quad \quad\textbf{For} $k=1,\cdots,K$\\
    \quad \quad \quad  Update $\mathbf q_k$ by solving problem (\ref{prob:q_update}).\\
    \quad \quad \quad Update $\mathbf u_k$ via equation (\ref{eq:u_update}).\\
    \quad \quad \textbf{End}\\
    \quad\quad\textbf{If} $\sum_{k=1}^K(\mathbf q_k-\bm\theta)<\epsilon$\\
    \quad\quad\quad $\mathsf{feas}=1$\\
    \quad\quad\quad \textbf{break}\\
    \quad\quad \textbf{End}
    \State \quad\textbf{End} \\
    \quad \textbf{If} $\mathsf{feas}==1$\\
    \quad\quad $\delta_{\max}=\delta$\\
    \quad\textbf{Else}\\
    \quad\quad $\delta_{\min}=\delta$\\
    \quad\textbf{End}\\
    \textbf{End}
    \State \textbf{Output} $\bm\theta$.
    \end{algorithmic}
\end{algorithm}

As a result, the ELS and ADMM-based optimization algorithm for solving problem $\mathcal P_{\bm\theta}$ is summarized in Algorithm \ref{alg:ADMM}.

\begin{Remark}[Distributed implementation]
It can be seen from equations (\ref{eq:q-update})(\ref{eq:theta_update})(\ref{eq:u_update}) that the proposed ADMM-based RIS design is in fact an iterative procedure including local phase shift update at each user and global phase aggregation at the BS. This allows an distributed implementation that supports massive number of user devices. Moreover, the global phase aggregation (30) can be computed using the recent over-the-air-computation (AirComp) technique \cite{Zhu2020AirComp}.
\end{Remark}

\begin{algorithm}
    \caption{The alternating optimization of $\mathcal P1$} \label{alg:AO}
    \begin{algorithmic}[1]
    \State \textbf{Input} $(M,N,K,B,T,\mathbf h_k,\sigma^2,c_k,d_k,D_k)$ for $k=1,\cdots,K$.
    \State \textbf{Initialize} $\mathbf p^0$, $\mathbf w_k^0$ and $\bm \theta^0$. Set iteration counter $t=0$.
    \State \textbf{Repeat}\\
    \quad Update $\mathbf p^{t+1}$ via Algorithm \ref{alg:MMpower}.\\
    \quad Update $\mathbf w_k^{t+1}$ via Equation (\ref{eq:optbeam}).\\
    \quad Update $\bm\theta^{t+1}$ via Algorithm \ref{alg:ADMM}.\\
    \quad $t\leftarrow t+1.$
    \State \textbf{Until} convergence.
    \State \textbf{Output} $\mathbf p^{\circ}=\mathbf p^{t}$, $\mathbf w_k^{\circ}=\mathbf w_k^{t}$, $\bm \theta^{\circ}=\bm\theta^{t}$.
    \end{algorithmic}
\end{algorithm}

\subsection{Alternating Optimization Framework}
We summarize the proposed alternating optimization algorithm in Algorithm \ref{alg:AO}. Specifically, the algorithm is first initialized by $\mathbf p^0$, $\mathbf w_k^0$ and $\bm\theta^0$. Then, given fixed $\mathbf p^t$, $\mathbf w_k^t$ and $\bm\theta^t$ in the $t$-th iteration, $\mathbf p^{t+1}$, $\mathbf w_k^{t+1}$ and $\bm\theta^{t+1}$ in the $(t+1)$-th iteration are updated alternatively via Algorithm \ref{alg:MMpower}, Equation (\ref{eq:optbeam}) and Algorithm \ref{alg:ADMM}, respectively. Moreover, the convergence of the AO algorithm is demonstrated in Lemma \ref{lem:AO_converge}.

\begin{Lemma}\label{lem:AO_converge}
    With the AO algorithm, the objective value of $\mathcal P1$ is non-increasing in the consecutive iterations.
\end{Lemma}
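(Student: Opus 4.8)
The plan is to show that each of the three block updates performed in one pass of Algorithm \ref{alg:AO} leaves the objective of $\mathcal P1$ non-increasing while the other two blocks are frozen, and then to chain the inequalities along the update order (power, then beamforming, then phase). Writing $F(\mathbf p,\{\mathbf w_k\},\bm\theta)=\max_{k\in\mathcal K}\Phi_k$ for the objective, the target is the sandwich
\begin{align}\label{eq:sandwich}
F(\mathbf p^{t+1},\{\mathbf w_k^{t+1}\},\bm\theta^{t+1})
\le F(\mathbf p^{t+1},\{\mathbf w_k^{t+1}\},\bm\theta^{t})
\le F(\mathbf p^{t+1},\{\mathbf w_k^{t}\},\bm\theta^{t})
\le F(\mathbf p^{t},\{\mathbf w_k^{t}\},\bm\theta^{t}),
\end{align}
where the three inequalities are produced by the phase, beamforming, and power updates, respectively.

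For the power block I would invoke Proposition \ref{prop:power}. Freezing $\{\mathbf w_k^{t}\}$ and $\bm\theta^{t}$, the surrogate majorization gives, operating on the pointwise maximum throughout, $\max_k\Phi_k(\mathbf p^{t+1})\le\max_k\widetilde\Phi_k(\mathbf p^{t+1}\mid\mathbf p^{t})$ by the upper-bound condition; then $\max_k\widetilde\Phi_k(\mathbf p^{t+1}\mid\mathbf p^{t})\le\max_k\widetilde\Phi_k(\mathbf p^{t}\mid\mathbf p^{t})$ because $\mathbf p^{t+1}$ minimizes $\max_k\widetilde\Phi_k(\cdot\mid\mathbf p^{t})$ over a feasible set that already contains $\mathbf p^{t}$; and finally $\max_k\widetilde\Phi_k(\mathbf p^{t}\mid\mathbf p^{t})=\max_k\Phi_k(\mathbf p^{t})$ by the local tangency condition. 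This delivers the rightmost inequality of \eqref{eq:sandwich}. (The identical one-step argument applied between consecutive SCA iterates also certifies that the inner loop of Algorithm \ref{alg:MMpower} is itself monotone.)

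For the beamforming block I would use that, with $\mathbf p^{t+1}$ and $\bm\theta^{t}$ fixed, the SINR of user $k$—and hence $\Phi_k$—depends on the beamformers only through $\mathbf w_k$, so the $K$ terms are decoupled. Since $\Phi_k$ is strictly decreasing in $\mathrm{SINR}_k$, the closed-form SINR-maximizer $\mathbf w_k^{t+1}=\mathbf w_k^{\diamond}$ of Lemma \ref{lem:opt_bf} minimizes each $\Phi_k$ individually, and minimizing every term of a pointwise maximum simultaneously minimizes the maximum. This yields the middle inequality of \eqref{eq:sandwich} (in fact as the exact per-block minimum).

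The delicate step, and the main obstacle, is the phase-shift block, because $\mathcal P_{\bm\theta}$ is nonconvex and is handled only approximately by the ELS/ADMM routine of Algorithm \ref{alg:ADMM}. The observation I would rely on is that the incumbent $\bm\theta^{t}$ is itself feasible for $\mathcal P'_{\bm\theta}$ at the error level $\delta^{t}:=F(\mathbf p^{t+1},\{\mathbf w_k^{t+1}\},\bm\theta^{t})$, since that level is by definition $\max_k\delta_k$ evaluated at $\bm\theta^{t}$. Hence the smallest error level for which $\mathcal P'_{\bm\theta}$ is feasible is at most $\delta^{t}$, and provided the bisection of Algorithm \ref{alg:ADMM} correctly reports feasibility whenever a feasible point exists, it returns a $\bm\theta^{t+1}$ whose error level satisfies $\delta^{t+1}\le\delta^{t}$, giving the leftmost inequality of \eqref{eq:sandwich}. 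I would make explicit the mild assumption that the inner ADMM solver is taken as a reliable feasibility oracle (or, to be fully rigorous, that the update retains $\bm\theta^{t}$ whenever the solver fails to strictly improve), since ADMM on a nonconvex feasibility set carries no global guarantee—this is precisely the gap one must flag. Chaining the three inequalities then proves the claim, and since $\Phi_k\ge 0$ the resulting monotone sequence is bounded below and therefore convergent.
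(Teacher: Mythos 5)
Your proposal is correct and follows the same skeleton as the paper's proof: chain three block-descent inequalities in the update order of Algorithm \ref{alg:AO}. The difference lies in how each inequality is justified, and here your version is actually the more rigorous one. The paper simply writes $g(\mathbf p^{t+1},\mathbf w^t,\bm\theta^t)=\min_{\mathbf p}g(\mathbf p,\mathbf w^t,\bm\theta^t)$ and ``analogously'' for the other two blocks, i.e.\ it treats every subproblem as if it were solved to global optimality. That is literally true only for the beamforming block (Lemma \ref{lem:opt_bf} gives the exact SINR maximizer per user). For the power block, Algorithm \ref{alg:MMpower} only converges to a KKT point, so the equality with $\min_{\mathbf p}g$ does not hold in general; your majorization--minimization chain (upper bound $\to$ surrogate minimization $\to$ tangency from Proposition \ref{prop:power}) is the right way to certify descent without claiming global optimality, and it is the argument the paper implicitly relies on but does not write down. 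For the phase block, you correctly identify the genuine soft spot: the ELS/ADMM routine of Algorithm \ref{alg:ADMM} is a heuristic feasibility solver on a nonconvex set, so monotonicity holds only under the oracle assumption you state, or under the practical safeguard of retaining $\bm\theta^{t}$ (which is feasible for $\mathcal P_{\bm\theta}'$ at the incumbent error level, as you observe) whenever no better point is certified. The paper passes over both issues in silence, so your proof buys a statement that is actually watertight under an explicitly flagged assumption, at the cost of a longer argument; the paper's buys brevity at the cost of asserting per-block global minimization that its own algorithms do not guarantee.
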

\begin{proof}
    Please refer to Appendix D.
\end{proof}

\subsection{Complexity Analysis}
The computational complexities of our proposed algorithms are discussed as follows.
\begin{itemize}
    \item For the power optimization problem, each $\mathcal P_{\mathbf p}[n+1]$ involves $K$ primal variables and $2K+1$ dual variables. Therefore, the worst-case complexity for solving $\mathcal P_{\mathbf p}[n+1]$ is $\mathcal O\left((3K+1)^{3.5}\right)$ \cite{BenTal2001cvx}. In turn, the total complexity for solving the power optimization problem is $\mathcal O\left(I_p((3K+1)^{3.5})\right)$, where $I_p$ is the number of successive convex approximation (SCA) iterations and its value is around 3 to 5 as shown in the simulation.
    \item The beamforming optimization problem is solved in closed form. Therefore, the computational complexity for beamforming optimization is negligible.
    \item Each subproblem in the ADMM algorithm for finding the feasible $\bm\theta$ is solved in closed form. Since we need to update $2K+1$ $M$-dimensional variables $\mathbf q_k$'s and $\mathbf u_k$'s, as well as $\bm\theta$ (all with closed form), the complexity of the ADMM algorithm can be regarded as $\mathcal O(I_AKM)$, where $I_A$ is the number of ADMM iterations. Moreover, the complexity of the bisection search outside the ADMM feasibility problem is $\mathcal O\left(\log_2(1/\epsilon)\right)$, where $\epsilon$ is the solution accuracy of the bisection search algorithm. Hence, the total complexity of solving the phase-shift optimization problem is $\mathcal O\left(I_{A}KM\log_2(1/\epsilon)\right)$.
    \end{itemize}

    It should be emphasized that the phase-shift optimization part in our proposed scheme has significantly lower complexity compared to the celebrated semidefinite relaxation (SDR) approach as in \cite{QWu2019IRSBeamforming}, which requires a complexity of  $\mathcal O\left(M^{4.5}\log_2(1/\epsilon)\right)$ \cite{Luo2010SDR} where $\epsilon$ is the solution accuracy. Besides, the solution achieved by SDR generally does not conform to the rank-1 constraint, and a large number of Gaussian randomizations are required to find a rank-1 solution.

    Based on the above analysis, it can be seen that the complexity of the proposed scheme is low. Moreover, hardware acceleration via FPGA or GPU can be used to further speed up the algorithm. More importantly, the power optimization problem can be considered as a nonlinear programming resource allocation problem. Such type of problem can be reformulated as a regression problem. Then a multi-task learning based feedforward neural network (MTFNN) model can be designed and trained to optimize the resource allocation problem. It has been shown in \cite{Yang2020MEC-MultiTaskLearning} that the MTFNN significantly saves the computational complexity and can be executed in real-time. The MTFNN-based approach to solving our problem is left for future investigation.

\section{Simulation Results}
In this section, we evaluate the performance of our proposed algorithms via simulations. We consider 4 users each with a learning task. The 4 learning tasks considered herein are SVM, CNN with MNIST dataset, CNN with Fashion-MNIST dataset and PointNet. The number of BS antennas varies from 10 to 50, and the number of reflecting elements of the RIS is set to 50. The total transmission time $T=10$ s, bandwidth $B=5$ MHz, total transmit power $P=1$ W, and noise power $\sigma^2=-77$ dBm. All the channels involved are assumed to be Rayleigh fading, and the channel coefficients (i.e., the elements in $\mathbf G$, $\mathbf h_{\text{d},k}$, and $\mathbf h_{\text{r},k}$, for all $k$) are normalized with zero mean and unit variance \cite{Guo2020sumrateRIS}. The pathloss exponent of the direct link, i.e., from BS to the users is 4 and the pathloss exponents of BS-RIS link and RIS-user link are set to 2.2.

\begin{figure}
    \begin{subfigure}{0.5\textwidth}
      \centering
      % include first image
      \includegraphics[width=\linewidth]{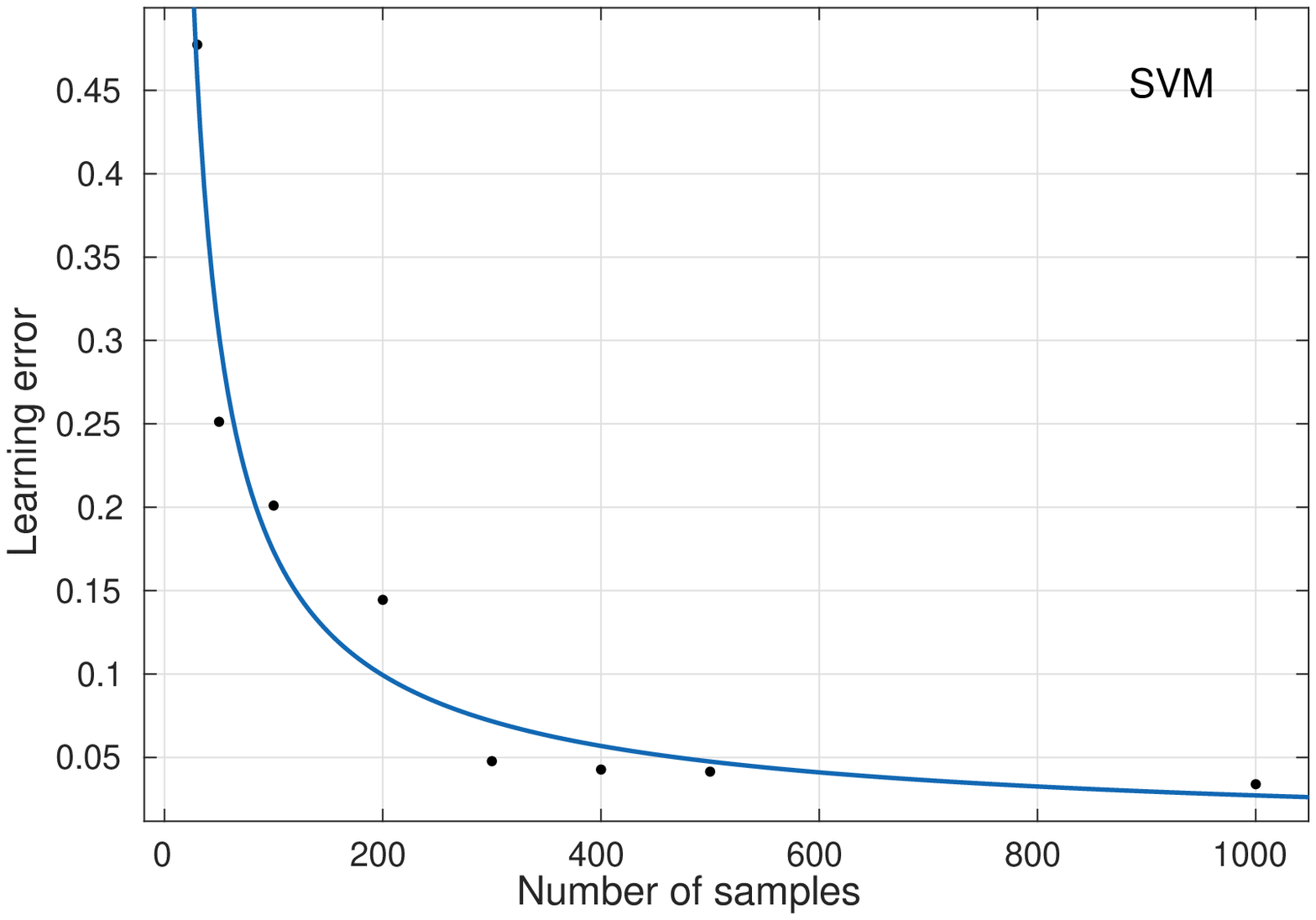}
      \caption{SVM}
      \label{fig:svm_fit}
    \end{subfigure}
    \begin{subfigure}{0.5\textwidth}
      \centering
      % include second image
      \includegraphics[width=\linewidth]{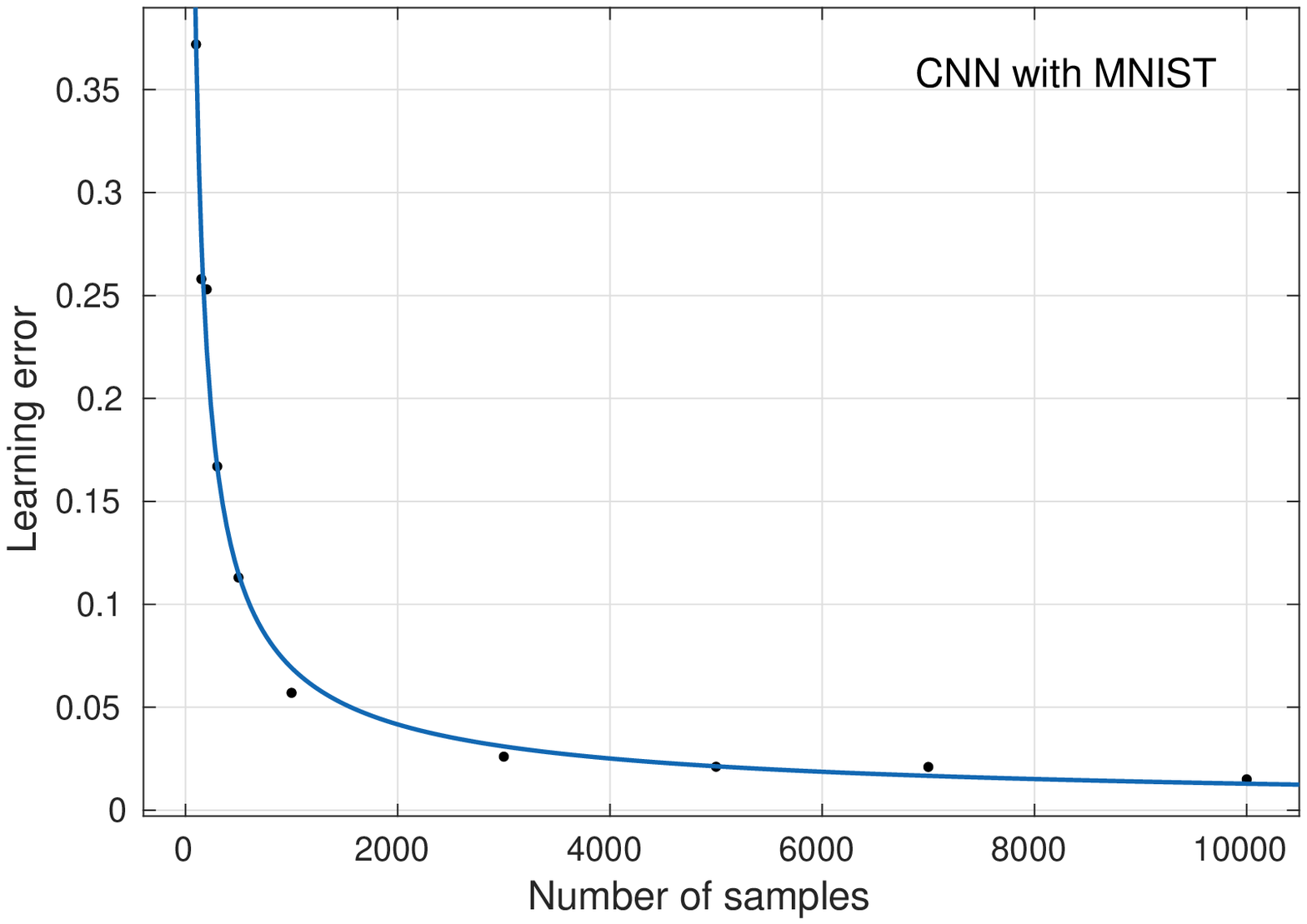}
      \caption{CNN with MNIST}
      \label{fig:mnist_fit}
    \end{subfigure}
    \newline
    \begin{subfigure}{0.5\textwidth}
      \centering
      % include third image
      \includegraphics[width=\linewidth]{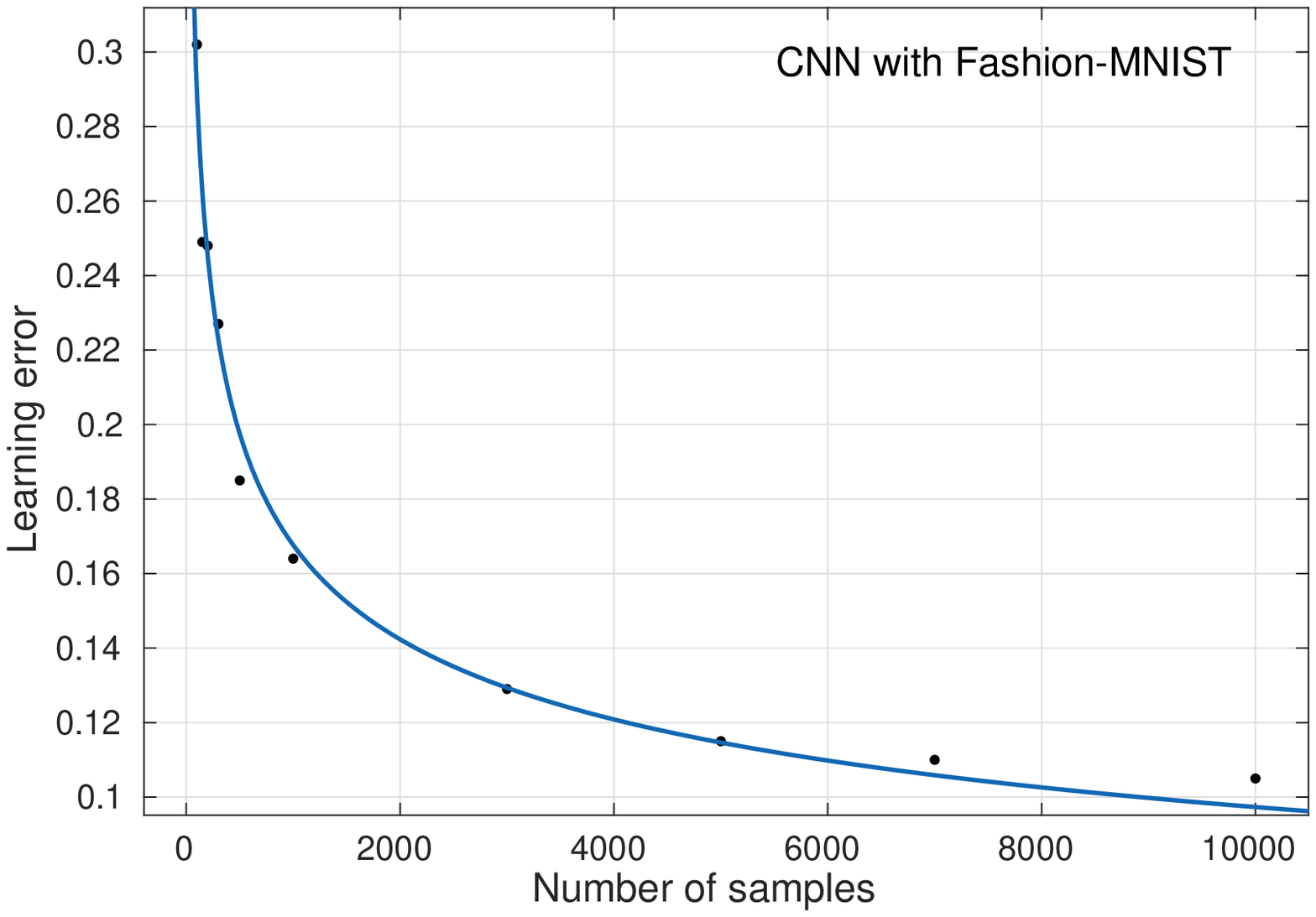}
      \caption{CNN with Fashion-MNIST}
      \label{fig:fashion_fit}
    \end{subfigure}
    \begin{subfigure}{0.5\textwidth}
      \centering
      % include fourth image
      \includegraphics[width=\linewidth]{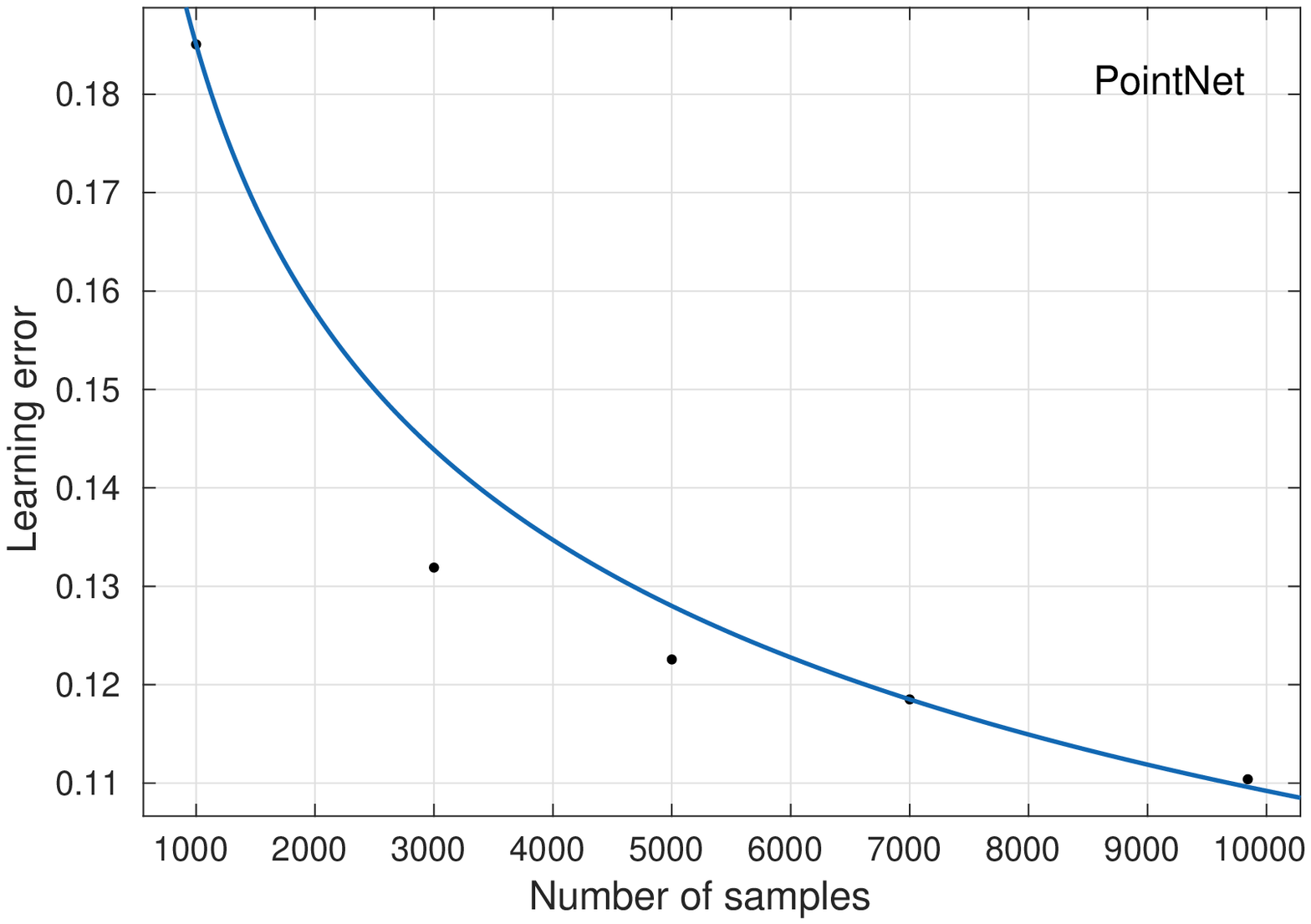}
      \caption{PointNet}
      \label{fig:point_fit}
    \end{subfigure}
    \caption{Fitting curves for the learning tasks.}
    \label{fig:fitting_curve}
\end{figure}

\begin{table}[h]
    \caption{Parameters of the nonlinear learning models for different learning tasks.}
    \centering
    \begin{tabular}{|p{2cm}|p{2cm}|p{2.5cm}|p{3.5cm}|p{2cm}|}
    \hline
    Learning Tasks & SVM & CNN with MNIST & CNN with Fashion-MNIST & PointNet\\
    \hline
    $c_k$ & 7.07 & 10.79 & 0.82& 0.96\\
    \hline
    $d_k$ & 0.81 & 0.73 & 0.23& 0.24\\
    \hline
    \end{tabular}
    \label{tab:nonlinearmodel}
\end{table}

\subsection{Parameter Fitting for the Learning Tasks}
In this part, the parameters $c_k$'s and $d_k$'s  in the nonlinear learning error models for the $K$ learning tasks are acquired by least mean square (LMS) fitting. Specifically, the SVM classifier is trained on the digits dataset in the Python Scikit-learn ML toolbox. The dataset contains 1797 images of size $8\times 8$ from 10 classes, with 5 bits (representing integers $0\sim16$) for each pixel \cite{scikit-learn}. Thus, each images needs $D_k=8\times 8\times 5+4=324$ bits. We train the SVM classifier using the first 1000 image samples with sizes $30,50,100,200,300,500,1000$, and use the last 797 image samples for testing. We record the corresponding test errors with different training sample sizes. After that, LMS fitting is applied to obtain $(c_k,d_k)$ for the SVM classifier. Then, we consider a 6-layer CNN with MNIST \cite{MNIST} and Fashion-MNIST \cite{Fashion-MNIST} datasets, respectively. The CNN consists of a $5\times 5$ convolution layer (with ReLu activation, 32 channels), a $2\times 2$ max pooling layer, another $5 \times 5$ convolution layer (with ReLu activation, 64 channels), a $2\times 2$ max pooling layer, a fully connected layer with 128 units (with ReLu activation), and a final softmax output layer (with 10 outputs). For the MNIST dataset, it consists of 70000 grayscale images (a training set of 60000 examples and a test set of 10000 examples) of handwritten digits, each with $28\times 28$ pixels. Thus, each image needs $D_k=28\times 28\times 8+4=6276$ bits. The Fashion-MNIST dataset is very similar to the MNIST dataset, except that it contains images of fashion items, such as ``T-shirt'', ``Trouser'', ``Bag'', etc., instead of handwritten digits. Each image sample of Fashion-MNIST dataset also needs $D_k=6276$ bits. We train the CNN classifier with sample sizes $100,150,200,300,500,1000,	3000,5000,7000,10000$ for both MNIST and Fashion-MNIST datasets, and record the test errors corresponding to the different training sample sizes. Then, similar LMS fitting is exploited to obtain $(c_k,d_k)$ for these two learning tasks. We also consider PointNet \cite{PointNet} as another learning task in the simulation, which applies feature transformations and aggregates point features by max pooling to classify 3D point clouds dataset ModelNet40, which contains 12311 CAD models from 40 object categories and  splits into 9843 for training and 2468 for testing. Each data sample has 2000 points with three single-precision floating-point coordinates (4 Bytes). Thus, the data size per sample is $D_k=(2000\times3\times 4+1)\times 8 = 192008$ bits. Similarly, we train the PointNet with sample sizes $1000,3000,5000,7000,9843$, and fit the result to the nonlinear learning error model to obtain $(c_k,d_k)$ for PointNet. The fitting curves are shown in Fig. \ref{fig:fitting_curve} and the parameters of the nonlinear learning error model for the above 4 learning tasks are listed in Table. \ref{tab:nonlinearmodel}.

\begin{figure}[h]
    \centering
    \begin{minipage}{.496\textwidth}
    \centering
    \includegraphics[width=\linewidth]{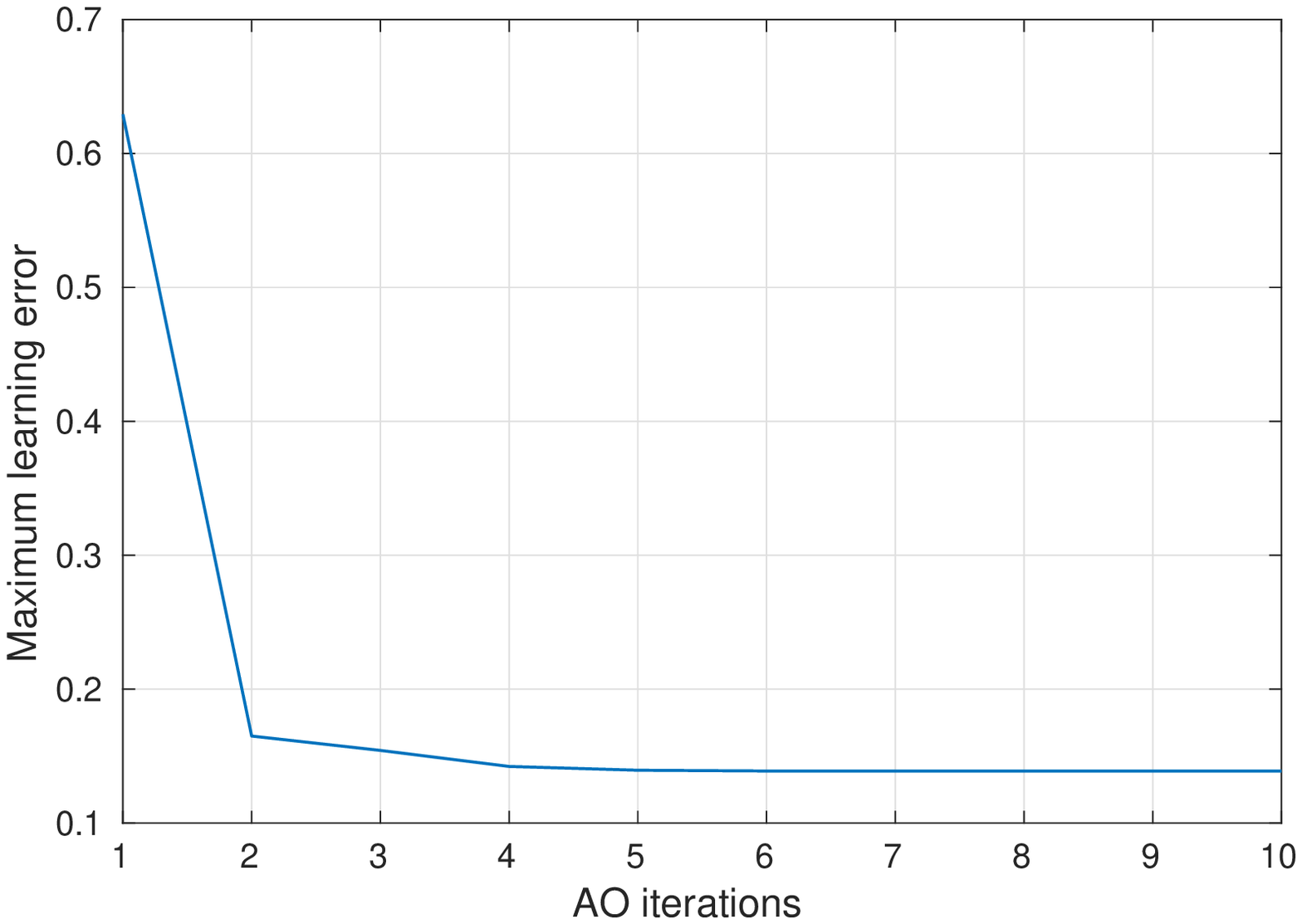}
    \caption{Convergence of the AO algorithm.}
    \label{fig:AO_converge}
    \end{minipage}
    \begin{minipage}{.496\textwidth}
    \centering
    \includegraphics[width=\linewidth]{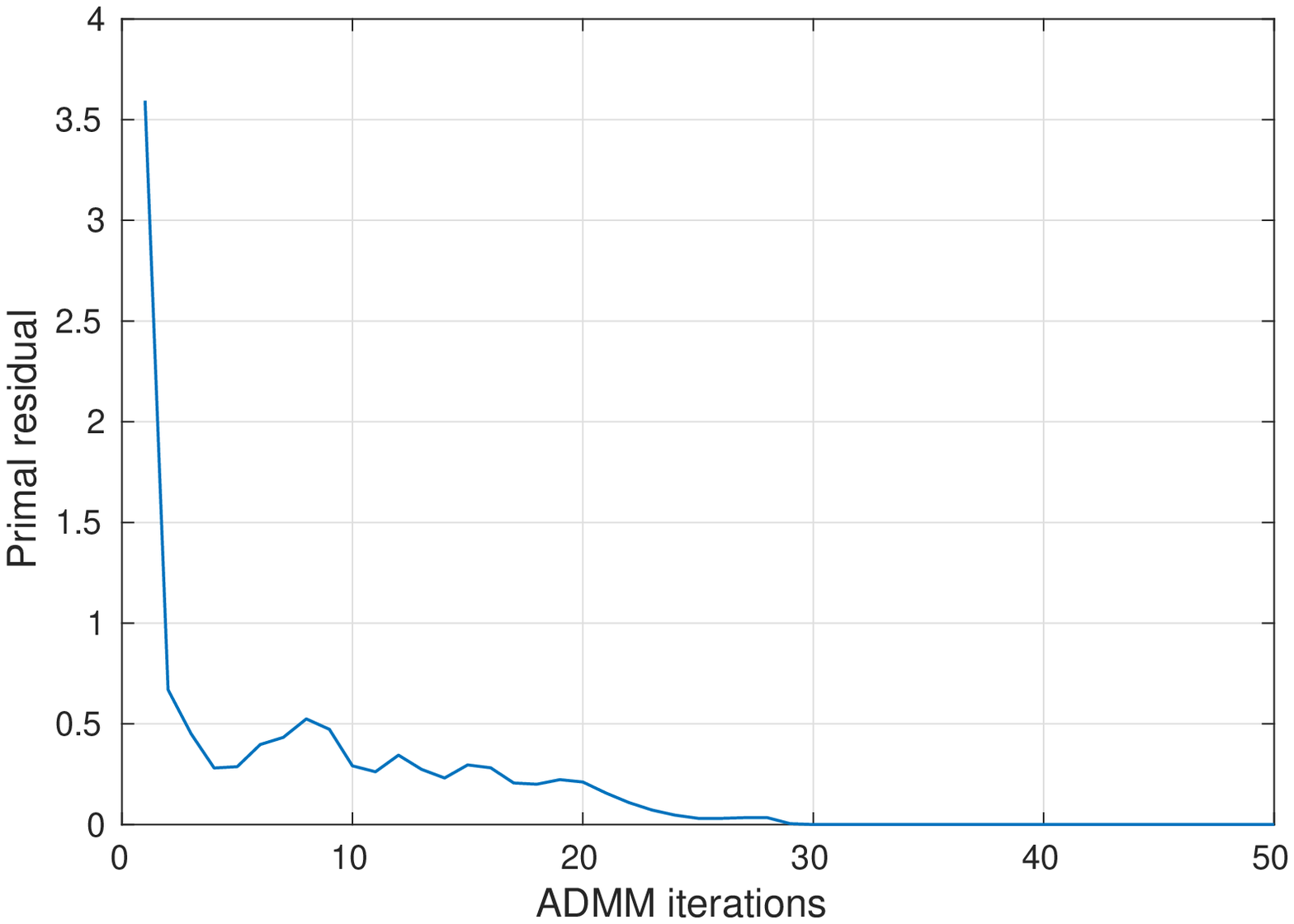}
    \caption{Convergence of ADMM.}
    \label{fig:ADMM_converge}
    \end{minipage}
\end{figure}
\subsection{Convergence of AO and ADMM algorithms}
The convergence of the AO algorithm has been proved theoretically in this paper and we further show it by simulations here. Fig. \ref{fig:AO_converge} shows that the value of objective function is non-increasing in the consecutive AO iterations, and converges after around 4 iterations, which is quite efficient.  Moreover, the convergence of the ADMM algorithm is also verified by simulations. It is shown in Fig. \ref{fig:ADMM_converge} that the primal residual concussively degrades and the ADMM algorithm converges after around 30 iterations.

\begin{figure}[h]
    \centering
    \begin{minipage}[t]{.489\textwidth}
        \centering
        \includegraphics[width=\linewidth]{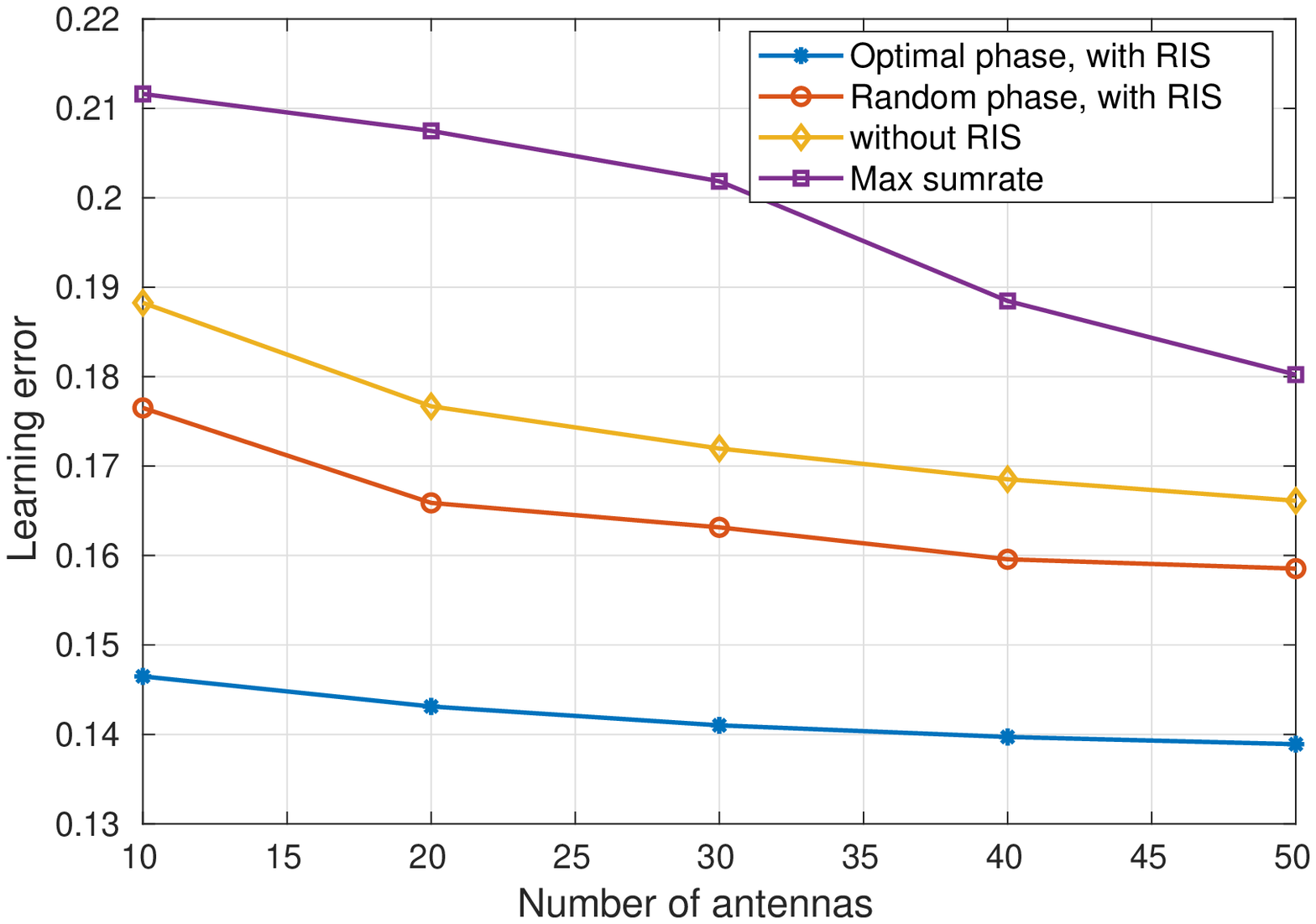}
        \caption{Learning error comparison of various benchmarks.}
        \label{fig:benchmarks}
    \end{minipage}
    \begin{minipage}[t]{.502\textwidth}
        \centering
        \includegraphics[width=\linewidth]{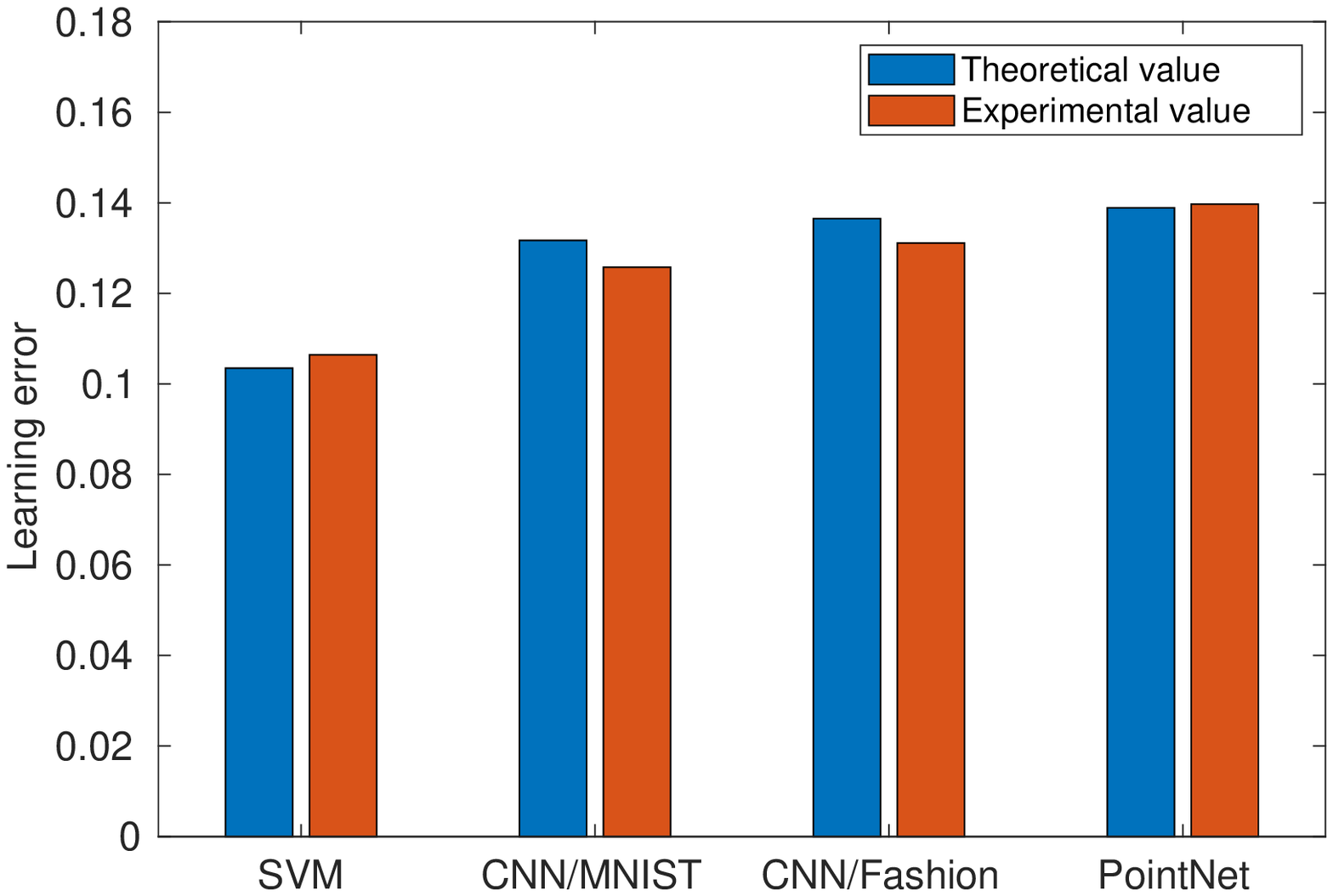}
        \caption{Theoretical learning errors v.s. Experimental learning errors. }
        \label{fig:experiment}
    \end{minipage}
\end{figure}
\subsection{Comparison with Various Benchmarks}
We demonstrate the superiority of our RIS-assisted learning-centric scheme with various benchmarks in Fig. \ref{fig:benchmarks}. The three benchmarks considered in this paper are: 1) without deploying the RIS, 2) deploying the RIS with random phase-shift matrix, and 3) maximizing the sumrate of as in conventional communication systems. It is shown that the performances of learning-centric schemes are always dramatically better than that of conventional sumrate-maximization scheme, even without the help of the RIS, which demonstrates the necessity of redesign of the resource allocation scheme in learning-driven scenarios. Fig. \ref{fig:benchmarks} also shows that with the presence of RIS, the learning performance can be improved remarkably, justifying the gain of deploying the RIS. Moreover, it can be seen that our proposed phase-shift optimization can further improve the learning accuracy significantly, validating the effectiveness of our proposed optimization algorithms.

To demonstrate the validity of the nonlinear learning error model, we compare the learning errors obtained from the theoretical error model with those obtained from real experiments. Specifically, we record the optimal number of data samples for each ML task and the corresponding theoretical learning error. Then, we use the optimized sample sizes to train the corresponding learning models, and average the resulting learning errors from 10 runs to obtain the experimental learning errors. Fig. \ref{fig:experiment} shows that the theoretical results conform to the experimental results very well.

\subsection{Experimental Results with CARLA and SECOND}
In Section V.A--C, the performance of our proposed scheme is verified based on simple ML models and datasets. However, in real-world applications, the ML task is much more complicated. For example, the 3D object detection task in autonomous driving needs to perform segmentation, classification, and box regression simultaneously. To this end, we consider objection detection in two different traffic scenarios: 1) crossroad scenario (shown in Fig. \ref{fig:car1}), and 2) downtown scenario (shown in in Fig. \ref{fig:car2}). In each scenario, an autonomous driving car senses the environment, and generates on-board video streams and LiDAR point clouds. The sensing data is then uploaded to a network edge for model training. It is assumed that the V2N data transmission is completed at the traffic lights, where the channels are stable and the road side units are equipped with RISs. However, continuous V2N data transmission is equally valid. Notice that due to different traffics and environments, the learning error models for the two scenarios are different.

\begin{figure}[h]
    \centering
    \begin{minipage}{0.49\textwidth}
        \centering
    \includegraphics[width=0.9\linewidth]{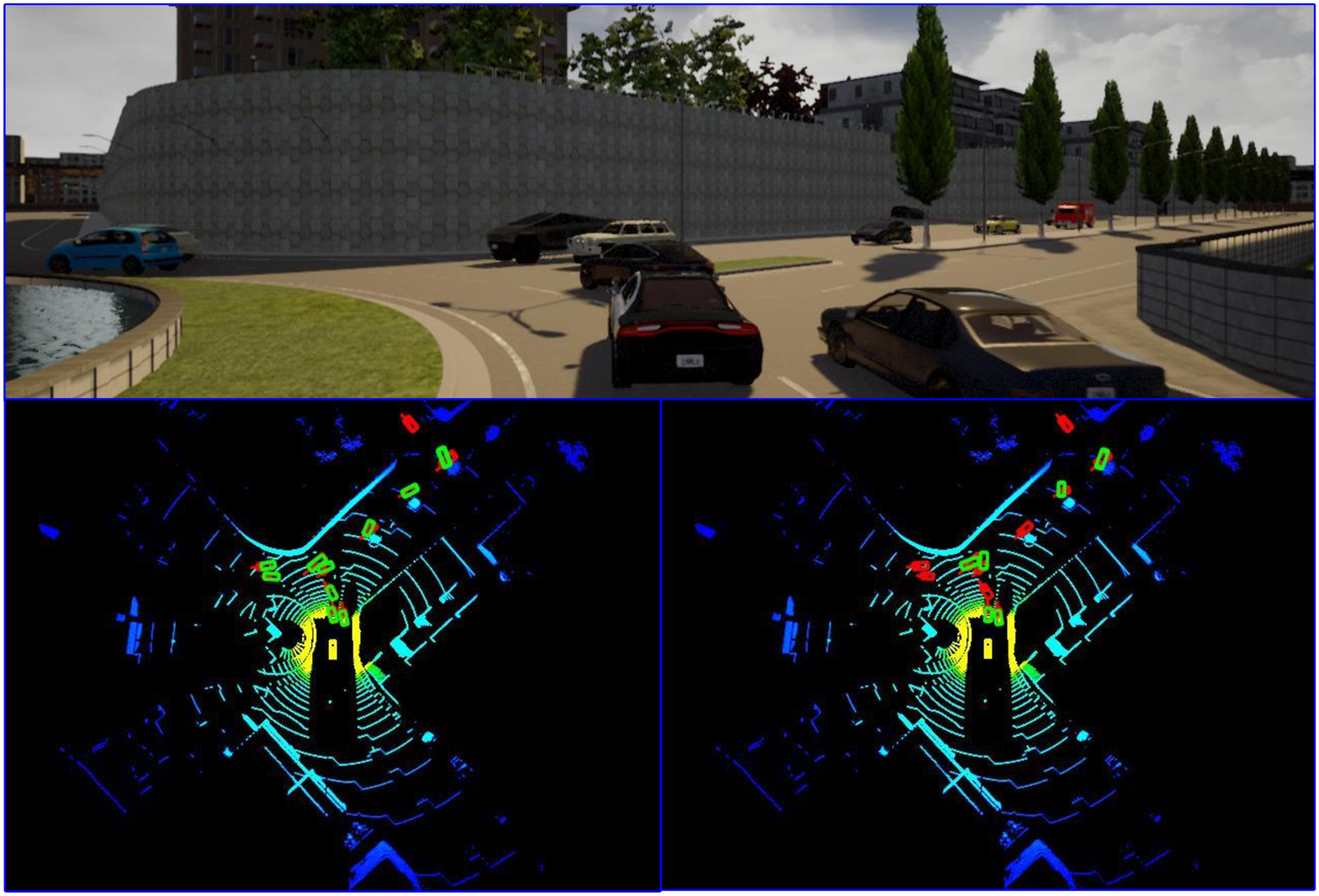}
    \caption{3D object recognition visualization for task 1\\
    Top: Crossroad scenario\\
    Bottom left: 3D object detection result with RIS\\
    Bottom right: 3D object detection result without RIS}
    \label{fig:car1}
    \end{minipage}
  \begin{minipage}{0.49\textwidth}
    \centering
    \includegraphics[width=0.9\linewidth]{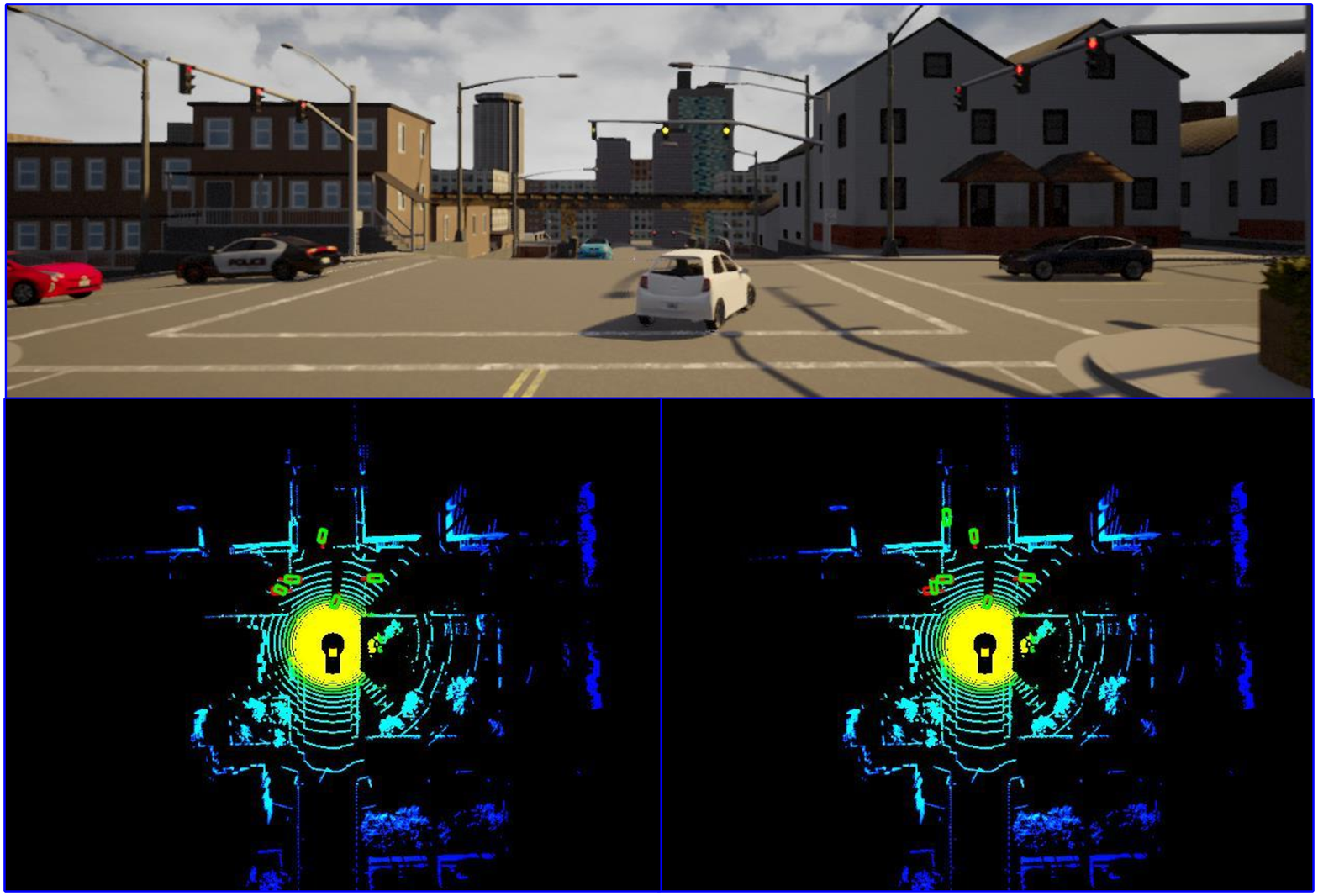}
    \caption{3D object recognition visualization for task 2\\
    Top: Downtown scenario\\
    Bottom left: 3D object detection result with RIS\\
    Bottom right: 3D object detection result without RIS}
    \label{fig:car2}
  \end{minipage}
\end{figure}
To simulate the two scenarios mentioned above, we employ the CARLA platform for dataset generation and the SECOND neural network for object detection. Specifically, CARLA is an open-source simulator that supports development, training, and validation of autonomous driving systems \cite{CARLA}. It is implemented based on Unreal Engine 4 (UE4), which provides state-of-the-art rendering quality, realistic physics, basic NPC logic, and an ecosystem of interoperable plugins. On the other hand, SECOND net is a voxel-based neural network that converts a point cloud to voxel features, and sequentially feeds the voxels into two feature encoding layers, one linear layer, one sparse CNN and one RPN (as detailed in Figure 1 in \cite{Yan2018SECOND}).
However, CARLA and SECOND are not compatible. To address this challenge, we develop a data transformation module using Python, such that the transformed dataset satisfies the KITTI standard \cite{Geiger2013IJRR}. In particular, both the crossroad and downtown datasets consist of four parts: 1) color and grayscale images captured by the high-resolution camera, 2) geographic coordinates including altitude, global orientation, velocities, accelerations, angular rates, accuracies and satellite information, 3) point cloud data from the LiDAR, and 4) object labels in the form of 3D tracklets.

Based on the above two platforms, we train the SECOND network for 50 epochs with a learning rate ranging from $10^{-4}$ to $10^{-6}$ under different number of samples. Following similar procedures in Section V.A,
the parameters in the error models for task 1 and task 2 are given by $(2.671, 0.664)$ and $(3.961, 0.501)$, respectively. By executing our proposed algorithms, the numbers of collected data samples at the edge are obtained as $20$ for task 1 and $123$ for task 2 (Note that each sample contains multiple objects as shown in Fig. \ref{fig:car1} and Fig. \ref{fig:car2}). The SECOND network is then trained with the crossroad and downtown datasets at  sample size of $20$ and $123$, respectively. The learning accuracies on the corresponding test datasets, each with $193$ unseen samples ($>1000$ objects) are $62.13\%$ for task 1 and $68.16\%$ for task 2. In contrast, the learning accuracies without the RIS are  $20.10\%$ for task 1 and $48.08\%$ for task 2. This indicates that the performance gain brought by RIS is at least $41.7\%$.

Fig. \ref{fig:car1} and Fig. \ref{fig:car2} give a visualized comparison of the learning performances for the two tasks, where the top of Figs. \ref{fig:car1} and \ref{fig:car2} are the real images captured by the two cars, and the bottom of Figs. \ref{fig:car1} and \ref{fig:car2} are the 3D object recognition results. The red blocks in the figures represent the ground-truth locations and directions of the cars on the road, and the green blocks represent the recognized locations and directions by the neural network. It can be seen that the recognition matches the ground-truth significantly better with the RIS than that without the RIS, which further demonstrates the benefit of deploying the RIS.

\section{Conclusions}
This paper investigated the RIS-assisted mobile edge computing systems with learning-driven tasks. The design of a learning-efficient system was achieved by jointly optimizing the transmit power of mobile users, the beamforming vectors of the BS and the phase-shift matrix of the RIS in an alternating optimization framework. Efficient algorithms were elaborated to address the highly nonconvex optimization problem induced by the nonlinear learning error model and unit-modulus constraints of RIS elements. Simulation and experimental results demonstrated the validity of the learning error model and superiority of our proposed scheme over various benchmarks. A unified communication-training-inference platform is developed based on the CARLA platform and the SECOND network, and a use case (3D object detection in autonomous driving) for the proposed scheme is demonstrated on the developed platform.

\appendix
\subsection{Proof of Proposition \ref{prop:power}}
1) \textbf{Upper bound condition:}
Applying first-order Taylor expansion to $\ln x$ at $x_0$, we have $\ln x\leq \ln x_0+\frac{x}{x_0}-1$. Thus, we have
\begin{align}\label{eq:power_taylor}
    \ln\left(\sum_{i=1,1\neq k}^K|\mathbf w_k\hermconj\mathbf h_i(\mathbf \Theta)|^2p_i+\sigma^2\right)\leq &\ln\left(\sum_{i=1,1\neq k}^K|\mathbf w_k\hermconj\mathbf h_i(\mathbf \Theta)|^2p_i^{\star}+\sigma^2\right)\nonumber\\
    &+\frac{\sum_{i=1,1\neq k}^K|\mathbf w_k\hermconj\mathbf h_i(\mathbf \Theta)|^2p_i+\sigma^2}{\sum_{i=1,1\neq k}^K|\mathbf w_k\hermconj\mathbf h_i(\mathbf \Theta)|^2p_i^{\star}+\sigma^2}-1,
\end{align}
where $\mathbf h_i(\mathbf \Theta)=\mathbf h_{\text{d},i}+\mathbf G\hermconj \mathbf{\Theta}\hermconj \mathbf h_{\text{r},i}$, for $i=1,\cdots,K$.

Multiplying $-1$ and adding $\ln\left(\sum_{i=1}^K|\mathbf w_k\hermconj\mathbf h_i(\mathbf \Theta)|^2p_i+\sigma^2\right)$ on both sides of (\ref{eq:power_taylor}), we obtain
\begin{align}\label{eq:power_ineq}
    \ln\left(1+\frac{|\mathbf w_k\hermconj\mathbf h_k(\mathbf \Theta)|^2p_k}{\sum_{i=1,1\neq k}^K|\mathbf w_k\hermconj\mathbf h_i(\mathbf \Theta)|^2p_i+\sigma^2}\right)\geq &\ln\left(\sum_{i=1}^K|\mathbf w_k\hermconj\mathbf h_i(\mathbf \Theta)|^2p_i+\sigma^2\right)\nonumber\\
    &-\frac{\sum_{i=1,1\neq k}^K|\mathbf w_k\hermconj\mathbf h_i(\mathbf \Theta)|^2p_i+\sigma^2}{\sum_{i=1,1\neq k}^K|\mathbf w_k\hermconj\mathbf h_i(\mathbf \Theta)|^2p_i^{\star}+\sigma^2}\nonumber\\
    &-\ln\left(\sum_{i=1,1\neq k}^K|\mathbf w_k\hermconj\mathbf h_i(\mathbf \Theta)|^2p_i^{\star}+\sigma^2\right)+1.
\end{align}
Substituting the result of (\ref{eq:power_ineq}) into $\widetilde\Phi_m(\mathbf p|\mathbf p^{\star})$ in (\ref{eq:power_bound}), and since $c_mx^{-d_m}$ is a decreasing function of $x$, we immediately have
\begin{equation}
    \widetilde\Phi_m(\mathbf p|\mathbf p^{\star})\geq c_m\left[\sum_{k=1}^K\frac{\eta_{k,m}BT}{D_m}\log_2\left(1+\frac{|\mathbf w_k\hermconj \mathbf h_k(\mathbf{\Theta})|^2p_k}{\sum_{i=1, i\neq k}^K |\mathbf w_k\hermconj \mathbf h_{i}(\mathbf{\Theta})|^2 p_{i}+\sigma^2}\right)\right]^{-d_m}=\Phi_m(\mathbf p)
\end{equation}

2) \textbf{Convexity:}
Notice that we can regard $\widetilde{\Phi}_m(\mathbf p|\mathbf p^{\star})$ as a composition function of $h$ and $g$, i.e., $\widetilde{\Phi}_m(\mathbf p|\mathbf p^{\star})=h(g(\mathbf p|\mathbf p^{\star}))$, where $h(x)=c_mx^{-d_m}$ and
\begin{align}
    g(\mathbf p|\mathbf p^{\star})=&\sum_{k=1}^K\frac{\eta_{k,m}BT}{D_m \ln2}\bigg[\ln\left(\sum_{i=1}^K|\mathbf w_k\hermconj \mathbf h_{i}(\mathbf{\Theta})|^2 p_{i}+\sigma^2\right)-\frac{\sum_{i=1,i\neq k}^K|\mathbf w_k\hermconj \mathbf h_{i}(\mathbf{\Theta})|^2 p_{i}+\sigma^2}{\sum_{i=1,i\neq k}^K|\mathbf w_k\hermconj \mathbf h_{i}(\mathbf{\Theta})|^2 p_{i}^{\star}+\sigma^2}\nonumber\\
    &-\ln\bigg(\sum_{i=1,i\neq k}^K|\mathbf w_k\hermconj \mathbf h_{i}(\mathbf{\Theta})|^2 p_{i}^{\star}+\sigma^2\bigg)+1\bigg].
\end{align}
It can be easily observed that $h(x)$ is convex and nonincreasing. Adding to the fact that $g(\mathbf p|\mathbf p^{\star})$ is concave in $\mathbf p$, the convexity of $\widetilde{\Phi}_m(\mathbf p|\mathbf p^{\star})$ is immediately established via composition rule \cite{boyd2004convex}.\\

3) \textbf{Local condition:} The proof of $\widetilde{\Phi}_m(\mathbf p^{\star}|\mathbf p^{\star})=\Phi_m(\mathbf p^{\star})$ is straightforward by putting $\mathbf p=\mathbf p^{\star}$ into the definition of $\widetilde{\Phi}_m(\mathbf p|\mathbf p^{\star})$ and $\Phi_m(\mathbf p)$.
On the other hand, in order to prove $\nabla_{\mathbf p}\widetilde{\Phi}_m(\mathbf p^{\star}|\mathbf p^{\star})=\nabla_{\mathbf p}\Phi_m(\mathbf p^{\star})$, we can calculate their derivatives which are given by
\begin{align}
    \label{eq:derivative_p_bound}
    \nabla_{p_j}\widetilde\Phi_m(\mathbf p|\mathbf p^{\star})=&-c_m d_m\bigg\{\sum_{k=1}^K\frac{\eta_{k,m}BT}{D_m\ln2}\bigg[\ln\left(\sum_{i=1}^K|\mathbf w_k\hermconj \mathbf h_i(\mathbf \Theta)|^2p_i+\sigma^2\right)\nonumber\\
    &-\frac{\sum_{i=1,i\neq k}^K|\mathbf w_k\hermconj \mathbf h_{i}(\mathbf{\Theta})|^2 p_{i}+\sigma^2}{\sum_{i=1,i\neq k}^K|\mathbf w_k\hermconj \mathbf h_{i}(\mathbf{\Theta})|^2 p_{i}^{\star}+\sigma^2}-\ln\bigg(\sum_{i=1,i\neq k}^K|\mathbf w_k\hermconj \mathbf h_{i}(\mathbf{\Theta})|^2 p_{i}^{\star}+\sigma^2\bigg)+1\bigg]\bigg\}^{-d_m-1}\nonumber\\
    &\times \sum_{k=1}^K\frac{\eta_{k,m}BT}{D_m\ln2}\bigg[\frac{|\mathbf w_k\hermconj\mathbf h_j(\mathbf \Theta)|^2}{\sum_{i=1}^K|\mathbf w_k\hermconj\mathbf h_i(\mathbf \theta)|^2p_i+\sigma^2}-\frac{|\mathbf w_k\hermconj \mathbf h_j(\mathbf\Theta)|^2\mathbb I(j\neq k)}{\sum_{i=1,i\neq k}|\mathbf w_k\hermconj\mathbf h_i(\mathbf\Theta)|^2p_i^{\star}+\sigma^2}\bigg]
\end{align}
\begin{align}
    \label{eq:derivative_p}
    \nabla_{p_j}\Phi(\mathbf p)=&-c_md_m\bigg[\sum_{k=1}^K\frac{\eta_{k,m}BT}{D_m\ln2}\ln\left(1+\frac{|\mathbf w_k\hermconj \mathbf h_k(\mathbf{\Theta})|^2p_k}{\sum_{i=1, i\neq k}^K |\mathbf w_k\hermconj \mathbf h_{i}(\mathbf{\Theta})|^2 p_{i}+\sigma^2}\right)\bigg]^{-d_m-1}\nonumber\\
    &\times \sum_{k=1}^K\frac{\eta_{k,m}BT}{D_m\ln2}\bigg[\frac{|\mathbf w_k\hermconj\mathbf h_j(\mathbf \Theta)|^2}{\sum_{i=1}^K|\mathbf w_k\hermconj\mathbf h_i(\mathbf \theta)|^2p_i+\sigma^2}-\frac{|\mathbf w_k\hermconj \mathbf h_j(\mathbf\Theta)|^2\mathbb I(j\neq k)}{\sum_{i=1,i\neq k}|\mathbf w_k\hermconj\mathbf h_i(\mathbf\Theta)|^2p_i+\sigma^2}\bigg]
\end{align}
Substituting $\mathbf p=\mathbf p^{\star}$ into (\ref{eq:derivative_p}), we immediately have $\nabla_{\mathbf p}\widetilde{\Phi}_m(\mathbf p^{\star}|\mathbf p^{\star})=\nabla_{\mathbf p}\Phi_m(\mathbf p^{\star})$.

\subsection{Proof of Lemma \ref{lem:opt_bf}}
Denoting $\mathbf h_i=\mathbf h_{\text{d},i}+\mathbf G\hermconj \mathbf{\Theta}\hermconj \mathbf h_{\text{r},i}$, problem $\mathcal P_{\mathbf w_k}$ can be written as
\begin{subequations}\label{prob:bf_equ}
    \begin{align}
        \max_{\mathbf w_k} \quad &\frac{\mathbf w_k\hermconj\mathbf h_k\mathbf h_k\hermconj \mathbf w_kp_k/\sigma^2}{\sum_{i=1,i\neq k}^K\mathbf w_k\hermconj\mathbf h_i\mathbf h_i\hermconj \mathbf w_k p_i/\sigma^2+\mathbf w_k\hermconj\mathbf w_k}\\
        \st \quad&\mathbf w_k\hermconj\mathbf w_k=1.
    \end{align}
\end{subequations}
It can be observed that the objective function of (\ref{prob:bf_equ}) remains unchanged when scaling $\mathbf w_k$ by any positive factor. Hence, we can safely remove the constraint and then scale the resultant $\mathbf w_k$ such that $\mathbf w_k\hermconj\mathbf w_k=1$ satisfies. Denote $\mathbf\Xi=\mathbf h_k\mathbf h_k\hermconj p_k/\sigma^2$, and $\mathbf\Gamma=\sum_{i=1}^K\mathbf h_i\mathbf h_i\hermconj p_i/\sigma^2+\mathbf I_N$. Then, problem (\ref{prob:bf_equ}) can be equivalently written in the following compact form.
\begin{align}\label{prob:bf_equ1}
    \max_{\mathbf w_k}\quad \frac{\mathbf w_k\hermconj\mathbf\Xi\mathbf w_k}{\mathbf w_k\hermconj\mathbf\Gamma\mathbf w_k-\mathbf w_k\hermconj\mathbf\Xi\mathbf w_k}.
\end{align}
Since the SINR must be a positive number, maximizing $\frac{\mathbf w_k\hermconj\mathbf\Xi\mathbf w_k}{\mathbf w_k\hermconj\mathbf\Gamma\mathbf w_k-\mathbf w_k\hermconj\mathbf\Xi\mathbf w_k}$ is equivalent to minimizing its reciprocal $\frac{\mathbf w_k\hermconj\mathbf\Gamma\mathbf w_k}{\mathbf w_k\hermconj\mathbf\Xi\mathbf w_k}-1$. Finally, problem (\ref{prob:bf_equ1}) is equivalent to
\begin{align}\label{prob:bf_equ2}
    \max_{\mathbf w_k} \quad \frac{\mathbf w_k\hermconj\mathbf\Xi\mathbf w_k}{\mathbf w_k\hermconj\mathbf\Gamma\mathbf w_k}.
\end{align}

In order to solve problem (\ref{prob:bf_equ2}), we introduce a new variable $\mathbf r=\mathbf\Gamma^{1/2}\mathbf w_k$. Then, problem (\ref{prob:bf_equ2}) is transformed to
\begin{align}\label{prob:bf_equ3}
    \max_{\mathbf r}\quad\frac{\mathbf r\hermconj\mathbf \Gamma^{-1/2}\mathbf\Xi\mathbf\Gamma^{-1/2}\mathbf r}{\mathbf r\hermconj\mathbf r},
\end{align}
which is a standard eigenvalue problem, and its optimal solution is the dominant eigenvector of $\mathbf \Gamma^{-1/2}\mathbf\Xi\mathbf\Gamma^{-1/2}$, i.e., $\mathbf r^*=\mathbf\Gamma^{-1/2}\mathbf h_k$. Substituting $\mathbf r=\mathbf r^*$ into $\mathbf r=\mathbf\Gamma^{1/2}\mathbf w_k$, we have $\mathbf w_k=\mathbf \Gamma^{-1}\mathbf h_k=\left(\mathbf I_N+\sum_{i=1}^K\frac{p_{i}}{\sigma^2}\mathbf h_{i}\mathbf h_{i}\hermconj\right)^{-1}\mathbf h_k$. Scaling $\mathbf w_k$ such that $\mathbf w_k\hermconj\mathbf w_k=1$, we have
\begin{align}
    \mathbf w_k^{\diamond}=\frac{\left(\mathbf I_N+\sum_{i=1}^K\frac{p_{i}}{\sigma^2}\mathbf h_{i}\mathbf h_{i}\hermconj\right)^{-1}\mathbf h_k}{\left\|\left(\mathbf I_N+\sum_{i=1}^K\frac{p_{i}}{\sigma^2}\mathbf h_{i}\mathbf h_{i}\hermconj\right)^{-1}\mathbf h_k\right\|_2}.
\end{align}
Thus, the proof is finished.

\subsection{Proof of Lemma \ref{lem:opt_q}}
Since strong duality holds for QCQP problems with one constraint as proved in \cite{boyd2004convex}, we can solve the dual problem of (\ref{prob:q_qcqp1_equ}). The Lagrangian of (\ref{prob:q_qcqp1_equ}) is
\begin{align}
    \mathcal L(\mathbf q,\mu)=\|\tilde{\mathbf q}-\bm\zeta^t\|^2+\mu(\tilde{\mathbf q}\hermconj\mathbf\Lambda\tilde{\mathbf q}-2\re\{\tilde(\mathbf b)\hermconj\tilde{\mathbf q}\}-\tau).
\end{align}
Taking derivative of $\mathcal L(\mathbf q,\mu)$ with respect to $\mathbf q$, we have
\begin{align}
    \frac{\partial \mathcal L(\mathbf q,\mu)}{\partial \mathbf q}=\tilde{\mathbf q}-\mathbf{\zeta}^t+\mu\mathbf{\Lambda}\tilde{\mathbf q}-\mu\tilde{\mathbf b}.
\end{align}
Setting $\frac{\partial \mathcal L(\mathbf q,\mu)}{\partial \mathbf q}=0$, we obtain the optimal $\tilde{\mathbf q}$ as
\begin{align}
    \tilde{\mathbf q}^*=(\mathbf I+\mu\mathbf\Lambda)^{-1}(\tilde{\bm\zeta}+\mu\tilde{\mathbf b}).
\end{align}
Substituting the above equation back to the equality constraint in (\ref{prob:q_qcqp1_equ}), it becomes a nonlinear equation with respect to $\mu$:
\begin{align}
    \chi(\mu)=\sum_{m=1}^M\lambda_m\left|\frac{\tilde{\zeta}_m+\mu\tilde b_m}{1+\mu\lambda_m}\right|^2-2\re\left\{\sum_{m=1}^M\tilde b_m^*\frac{\tilde{\zeta}_m+\mu\tilde b_m}{1+\mu\lambda_m}\right\}-\tau,
\end{align}
where $\lambda_m$ is the $m$-th diagonal entry of $\mathbf\Lambda$. Thus, the proof is finished.
\subsection{Proof of Lemma \ref{lem:AO_converge}}
For ease of notation, we denote the objective function of $\mathcal P1$ as $g(\mathbf p,\mathbf w,\bm\theta)$. Assume $\mathbf p^t$, $\mathbf w^t$ and $\bm\theta^t$ are obtained by the corresponding optimization problems in the $t$-th iteration, respectively. Then, we have
\begin{align}
g(\mathbf p^{t+1},\mathbf w^t,\bm\theta^t)=\min_{\mathbf p} g(\mathbf p,\mathbf w^t,\bm\theta^t)\leq g(\mathbf p^{t},\mathbf w^t,\bm\theta^t).
\end{align}
Analogously, it holds that
\begin{align}
    g(\mathbf p^{t+1},\mathbf w^{t+1},\bm\theta^t)=\min_{\mathbf w} g(\mathbf p^{t+1},\mathbf w,\bm\theta^t)\leq g(\mathbf p^{t+1},\mathbf w^t,\bm\theta^t)\leq g(\mathbf p^{t},\mathbf w^t,\bm\theta^t).
\end{align}
Finally, it is established that
\begin{align}
    g(\mathbf p^{t+1},\mathbf w^{t+1},\bm\theta^{t+1})\leq g(\mathbf p^t,\mathbf w^t,\bm\theta^t).
\end{align}
Therefore, the objective value of $\mathcal P1$ is non-increasing in the consecutive AO iterations, which indicates that the AO algorithm is guaranteed to converge.

\bibliographystyle{IEEEtran}
\bibliography{RISEL_TCCN.bbl}

\end{document}